\tikzstyle{int}=[draw, fill=white!20, minimum size=2em]
\tikzstyle{init} = [pin edge={to-,thin,black}]
\newcounter{eg}[section]
\renewcommand{\theeg}{\arabic{section}.\arabic{eg}}
\newenvironment{examp}[1][]{\refstepcounter{eg}
\par\medskip \noindent
   \textit{Example~\theeg. #1} \rmfamily}{\hfill $\square$   \hspace{-4.5pt} \vspace{6pt}}
\colorlet{red}{black}
\newcommand{\AS}{{\mathcal A}_S }
\newcommand{\AR}{{\mathcal A}_R }
\newcommand{\el}{E_L }
\newcommand{\ef}{E_F}
\newcommand{\brs}{\mathscr B_S}
\newcommand{\brr}{\mathscr B_R}
\newcommand{\Rbb}{\mathbb R}
\newcommand{\Nsrr}{{\mathcal N}_R (s^*; \ut)}
\newtheorem{theorem}{Theorem}[section]
\newtheorem{lemma}[theorem]{Lemma}
\newtheorem{proposition}[theorem]{Proposition}
\newtheorem{definition}{Definition}[section]
\def\bkE{{\rm I\kern-.17em E}}
\def\bk1{{\rm 1\kern-.17em l}}
\def\bkD{{\rm I\kern-.17em D}}
\def\bkR{{\rm I\kern-.17em R}}
\def\bkP{{\rm I\kern-.17em P}}
\def\bkZ{{\bf{Z}}}
\def\bkE{{\rm I\kern-.17em E}}
\def\bk1{{\rm 1\kern-.17em l}}
\def\bkD{{\rm I\kern-.17em D}}
\def\bkR{{\rm I\kern-.17em R}}
\def\bkP{{\rm I\kern-.17em P}}
\newcommand{\pushright}[1]{\ifmeasuring@#1\else\omit\hfill$\displaystyle#1$\fi\ignorespaces}
\newcommand{\pushleft}[1]{\ifmeasuring@#1\else\omit$\displaystyle#1$\hfill\fi\ignorespaces}
\def\bkZ{{\bf{Z}}}
\def\b12{(\beta_1,\beta_2)}
\newenvironment{example}{{\noindent \bf Example}}{\hfill $\square$\hspace{-4.5pt}\vspace{6pt}}
\newcounter{example}
\renewcommand{\theexample}{\thesection.\arabic{example}}
\newcounter{remark}
\renewcommand{\theremark}{\thesection.\arabic{remark}}
\def\Xscr{\mathcal{X}}
\def\Yscr{\mathcal{Y}}
\newlength{\noteWidth}
\long\def\notes#1{\ifinner
{\tiny #1}
\else
\marginpar{\parbox[t]{\noteWidth}{\raggedright\tiny #1}}
\fi\typeout{#1}}
 \def\notes#1{\typeout{read notes: #1}} 
\newcommand{\ut}{\mathscr{U}}
\newcommand{\ie}{i.e.\@\xspace} 
\newcommand{\Real}{\ensuremath{\mathbb{R}}}
\newcommand{\inv}{^{-1}}
\def\OPT{{\rm OPT}}
\def\Pbb{{\mathbb{P}}}
\def\Nbb{{\mathbb{N}}}
\def\Range{\mathop{{\rm range}}}
\def\spose#1{\hbox to 0pt{#1\hss}}
\def\text #1{\hbox{\quad#1\quad}}
\def\xhat{{\hat x}}
\def\nthinsp{\mskip -2   mu}
\def\superstar{^{\raise 0.5pt\hbox{$\nthinsp *$}}}
\def\SUPERSTAR{^{\raise 0.5pt\hbox{$*$}}}
\def\lamstarT {\lambda^{\raise 0.5pt\hbox{$\nthinsp *$}T}}
\def\Dscr{{\cal D}}
\def\Iscr{{\cal I}}
\def\Pscr{{\cal P}}
\def\Cscr{{\cal C}}
\def\Xscr{{\cal X}}
\def\Yscr{{\cal Y}}
\def\th{^{\rm th}}
\def\Ubar{\skew2\bar U}
\def\xhat{\skew{2.8}\widehat x}
\def\Xhat{\widehat X}
\def\non{\nonumber}
\let\forallnew\forall
\renewcommand{\forall}{\forallnew\ }
\let\forall\forallnew
		\def\bkE{{\rm I\kern-.17em E}}
		\def\bk1{{\rm 1\kern-.17em l}}
		\def\bkD{{\rm I\kern-.17em D}}
		\def\bkR{{\rm I\kern-.17em R}}
		\def\bkP{{\rm I\kern-.17em P}}
		\def\bkY{{\bf \kern-.17em Y}}
		\def\bkZ{{\bf \kern-.17em Z}}
		\def\bkC{{\bf  \kern-.17em C}}
		\def\bsp{\begin{split}}
		\def\beq{\begin{eqnarray}}
		\def\bal{\begin{align*}}
		\def\bc{\begin{center}}
		\def\be{\begin{enumerate}}
		\def\bi{\begin{itemize}}
		\def\bs{\begin{small}}
		\def\bS{\begin{slide}}
		\def\ec{\end{center}}
		\def\ee{\end{enumerate}}
		\def\ei{\end{itemize}}
		\def\es{\end{small}}
		\def\eS{\end{slide}}
		\def\eeq{\end{eqnarray}}
		\def\eal{\end{align*}}
		\def\esp{\end{split}}
		\def\qed{ \vrule height7.5pt width7.5pt depth0pt}  
	\def\cp2problem#1#2#3#4{\fbox
		 {\begin{tabular*}{0.9\textwidth}
			{@{}l@{\extracolsep{\fill}}l@{\extracolsep{6pt}}l@{\extracolsep{\fill}}c@{}}
				#1 & & $#4 $ 
			\end{tabular*}}}
		\def\bkE{{\rm I\kern-.17em E}}
		\def\bk1{{\rm 1\kern-.17em l}}
		\def\bkD{{\rm I\kern-.17em D}}
		\def\bkR{{\rm I\kern-.17em R}}
		\def\bkP{{\rm I\kern-.17em P}}
		\def\bkZ{{\bf{Z}}}
\newcommand {\beeq}[1]{\begin{equation}\label{#1}}
\newcommand {\eeeq}{\end{equation}}
\newcommand {\bea}{\begin{eqnarray}}
\newcommand {\eea}{\end{eqnarray}}
\def\texitem#1{\par\smallskip\noindent\hangindent 25pt
               \hbox to 25pt {\hss #1 ~}\ignorespaces}
\def\bsp{\begin{split}}
		\def\beq{\begin{eqnarray}}
		\def\bal{\begin{align*}}
		\def\bc{\begin{center}}
		\def\be{\begin{enumerate}}
		\def\bi{\begin{itemize}}
		\def\bs{\begin{small}}
		\def\bS{\begin{slide}}
		\def\ec{\end{center}}
		\def\ee{\end{enumerate}}
		\def\ei{\end{itemize}}
		\def\es{\end{small}}
		\def\eS{\end{slide}}
		\def\eeq{\end{eqnarray}}
		\def\eal{\end{align*}}
		\def\esp{\end{split}}
		\def\qed{ \vrule height7.5pt width7.5pt depth0pt}  
\newenvironment{proof}[1][]{{\noindent \emph {Proof} #1: }}{\hfill \qed \vspace{3pt}\\ }
\def\Cscr{{\cal C}}
\newcommand*{\rom}[1]{\expandafter\@slowromancap\romannumeral #1@}
\begin{document}
\title{\LARGE \textbf{Information Revelation Through Signalling}}

\author[1]{Reema Deori}
\author[1]{Ankur A. Kulkarni}

\address[1]{{Systems and Control Engineering, Indian Institute of Technology Bombay},
{Powai}, 
{Mumbai},
	{400076}, 
	{Maharashtra},
	{India. deori.reema@iitb.ac.in, kulkarni.ankur@iitb.ac.in}}

\begin{abstract}
This paper studies a Stackelberg game wherein a sender (leader) attempts to shape the information of a less informed receiver (follower) who in turn takes an action that determines the payoff for both players. The sender chooses signals to maximize its own utility function while the receiver aims to ascertain the value of a source that is privately known to the sender. It is well known that such sender-receiver games admit a vast number of equilibria and not all signals from the sender can be relied on as truthful. Our main contribution is an exact characterization of the minimum number of distinct source symbols that can be correctly recovered by a receiver in \textit{any} equilibrium of this game; we call this quantity the \textit{informativeness} of the sender. We show that the informativeness is given by the \textit{vertex clique cover number} of a certain graph induced by the utility function, whereby it can be computed based on the utility function alone without the need to enumerate all equilibria. We find that informativeness  characterizes the existence of well-known classes of separating, pooling and semi-separating equilibria. We also compare informativeness with the amount of information obtained by the receiver when it is the leader and show that the informativeness is always greater than the latter, implying that the receiver is better off being a follower. {\color{red}Additionally, we also show that when the players play behavioral strategies, an equilibrium may not exist.}
\end{abstract}

\maketitle


\section{Introduction}
The real world has no dearth of examples of interactions between parties with asymmetric information. A player's decisions in these interactions are driven by the alignment of its utility function with that of other players  and the possibility of shaping the information of the other players in order to affect the eventual actions chosen. \textit{Signalling}  is the regime where a more informed player (or  \textit{sender} of information) attempts to shape the information of a less informed player (or the \textit{receiver})  who in turn takes an action that determines the payoff of both players. The sender cannot take actions of its own, and hence attempts to maximize its utility by sending suitable signals to the receiver.

Our interest in this paper is the quantification of the number of truthful signals revealed by the sender in the process of signalling to the receiver. The sender may choose to misreport or reshape its information in order to achieve its own benefits. Thus not all signals from the sender can be relied on as truthful. Indeed, this has long been known in the Bayesian signalling games literature; distinct classes of equilibria in which sender's signals are informative or non-informative are known through the concepts of \textit{pooling} and \textit{separating} equilibria ~\cite{sobel2009signaling}. While the non-informativeness of the sender's signals is generally understood, the precise extent of the non-informativeness and the structure of the sender's type space that is correctly revealed to the receiver is, to the best of our knowledge, not fully understood.

We seek to fill this gap in this paper. We study a particular setting of signalling in which the sender aims to maximize a general utility function, but the receiver's goal is to maximize the number of sender types correctly recovered. We ask the question -- what is the \textit{amount} of information that such a receiver can glean through signals sent by the sender? We quantify the amount of information by the minimum number of distinct source symbols or sender types that can be correctly inferred by the receiver in any equilibrium of the game. Our main contribution is a characterization of this quantity, called the \textit{informativeness} of the sender, solely in terms of the utility function of the sender. 

\subsection{Main findings}
Following is a brief outline of our model. The source or type of the sender is chosen from a finite source alphabet $\Xscr $ and is known privately to the sender. 
The sender's signalling strategy maps each source symbol to a signal and the receiver's strategy maps the signal received (noiselessly) to a recovered symbol. The sender has a utility that is a function of the recovered symbol and the source symbol. The sender chooses its signalling strategy to maximize its utility for each source symbol whereas the receiver chooses its strategy to maximize the number of symbols that are correctly recovered.

Since we are concerned with signalling, we study this problem through the Stackelberg equilibrium concept with the sender as the leader and the receiver as the follower. The sender may behave differently for different values of the source, ranging from being truthful, partially truthful or completely prevaricating, depending on its utility function. Our overarching goal is understanding the \textit{informativeness} of a sender, \ie, the minimum number of symbols correctly recovered by the  receiver in any Stackelberg equilibrium. Informativeness provides a metric by which one can measure the minimum amount of information that is revealed to the receiver in any interaction with the sender. 

In general this game admits a vast number of equilibria, each yielding a different number of recovered symbols, whereby computing the informativeness is nontrivial. 
Our first main finding is a structural characterization of any Stackelberg equilibrium strategy of the sender. We define a graph, called the \textit{strong sender graph}, on the source alphabet $ \Xscr $  and induced by the utility function $ \ut. $ We show that in any Stackelberg equilibrium strategy, the sender partitions this graph into cliques and sends a unique signal for each clique. Thus all source symbols in the same clique are mapped to the same signal. The receiver, with this limited information, can recover the source only to the granularity of these cliques.   Thanks to the structural characterization, the informativeness is given by the \textit{vertex clique cover number} of this graph. This characterization gives the informativeness as a function of only the utility function of the sender, without the need to enumerate the equilibria of the game. 

In Bayesian signalling games, one can categorize Stackelberg equilibria into three known categories namely, \textit{separating, pooling} and \textit{semi-separating} equilibria. We extend these notions in a natural manner to our game and find that the value of informativeness characterizes the existence of these equilibria. 
If the informativeness is $ |\Xscr| $, then there exist only separating equilibria, if the informativeness is unity then there exist only pooling equilibria, whereas if the informativeness is greater than unity but less than $ |\Xscr|$, there exists a semi-separating equilibrium. 

In the last part of our paper, we study a Bayesian analogue of our setting; remarkably we find that it need not admit an equilibrium. This example demonstrates some departures from the usual setting of Bayesian signalling games. Since in our setting the receiver looks to maximize the number of recovered symbols in a distribution-free sense, the receiver's objective is nonlinear in the posterior distribution. Due to this the insights and intuitions from the signalling games literature do not extend directly to our setting. Mathematically this nonlinearity also leads to discontinuities that jeopardize the existence of an equilibrium. We find these observations to be instructive and may be of independent interest.

In a related but distinct model, this setting was studied with the receiver as the leader ~\cite{vora2022shannon,vora2020zero}. There it was shown that the \textit{independence number} of another graph, called the \textit{weak sender graph}, quantifies the number of symbols that can be recovered by the receiver; this quantity is called the information extraction capacity of the sender with utility $ \ut $. Surprisingly, we show that this number never exceeds the informativeness of $ \ut.$ 
Hence the receiver always \textit{benefits} from a being a follower: the minimum amount of information recovered by the receiver when it is a follower is no less than the maximum information it recovers when it is a leader. In fact, when $ \ut $ is symmetric, there is optimization duality at play here: the information extraction capacity and the informativeness are given by \textit{primal-dual} integer programs. This  connection between a switch in the order of commitment and a primal to dual transformation is fascinating, whose implications we hope to understand better in the future.

\subsection{Related Work}
Within game theory, our model can be identified as a \textit{game with incomplete information} (in particular a \textit{sender-receiver game})  ~\cite{myerson1997game}.  
There are two broad subclasses in this setting -- \textit{screening games} and \textit{signalling games} ~\cite{rasmusen2007games}. In screening games, the uninformed player is the leader while in signalling games, the informed player is the leader, as in our model. Most of the works in such games consider a Bayesian setting; see for example ~\cite{kamenica2011bayesian} where \textit{information design} or \textit{Bayesian persuasion} is the subject. In addition to a leader-follower setting, there are settings such as the one studied by Crawford and Sobel~\cite{crawford1982strategic} where the players moved simultaneously. In the setting studied in~\cite{kamenica2011bayesian} the sender commits to its choice of policy, and the receiver responds to it. Thus, the concept of \textit{Stackelberg equilibrium}~\cite{basar1999dynamic} is used to study the sequential move game in~\cite{kamenica2011bayesian} instead of Nash equilibrium~\cite{nash1951non} as done in~\cite{crawford1982strategic}. However in all of these studies concerning the interplay of information and games, to the best of our knowledge, quantification of the information transmitted remains open. 

Having said that, strategic communication has been studied from the perspective of information theory by Akyol, Langbort and Basar in~\cite{akyol2015strategic,akyol2016information}, Treust and Tomala in~\cite{treust2019strategic,le2019persuasion} and by Vora and Kulkarni in ~\cite{vora2022shannon,vora2020zero,vora2020information}. In ~\cite{akyol2015strategic,akyol2016information}, a sequential communication setting involving quadratic distortion measures, Gaussian sources and side information is studied. Treust and Tomala studied a persuasion problem with communication constraints~\cite{le2019persuasion}, while in~\cite{treust2019strategic}  they studied a similar problem with side information with the receiver.  The main contribution is in characterizing the optimal payoff of the sender using the information theoretic capacity of the channel. Signalling games in the communication set up have also been studied in~\cite{farokhi2016estimation,nadendla2018effects,saritacs2016quadratic,saritacs2020dynamic,kazikli2021signaling,kazikli2021signaling_1,kazikli2022quadratic}. To the best of our understanding neither of these studies address the question we ask. 
Our earlier works~\cite{vora2022shannon,vora2020zero,vora2020information} study a setting similar to ours but as a  screening game, while ours is a signalling game.

There has also been a lot of recent interest in the interplay of games and graphs \cite{goyal2009connections,bramoulle2007public,pandit2018refinement,bramoulle2014strategic}. The most common model studied involves denoting the players by the nodes or the vertices of the graph where two nodes are adjacent if their utility functions are dependent on each other's action. For the graphs arising in our study, the nodes are elements from the source alphabet and the edges of the sender graphs are induced by the utility function of the sender. This, to the best of our knowledge is a novel interplay of game theory and graph theory.

The paper is organized as follows. Section \ref{sec2} formulates the problem and the Stackelberg equilibrium is characterized in Section \ref{sec4}. In Section \ref{sec5}, we introduce and characterize the informativeness of the sender and study its applications. We study behavioral strategies in Section~\ref{sec:random} and in Section \ref{sec6} we conclude the paper.

\section{Problem formulation and background}\label{sec2}

\subsection{Problem Formulation}

Let $\Xscr=\{0,1,2,\hdots  ,q-1\}$, where $q\in \Nbb$, be a set of source symbols observed by the sender and let  $x\in \Xscr$ represent any one such symbol. The  symbol $ x \in \Xscr $ can be thought of as the sender's type. The sender privately observes the symbols $ x $ and transmits a message\footnote{One may more generally consider a message space $ \Yscr $. It is easy to show that as long as $|\Yscr| \geq |\Xscr| $, then without loss of generality we can take $ \Yscr = \Xscr $.} $ y =s(x)$  to the receiver using the strategy $s: \Xscr \rightarrow \Xscr$. The receiver receives this message $ y $ noiselessly and attempts to recover the sent message using the strategy $g: \Xscr \rightarrow \Xscr$ where $g(y)=\hat {x}$ is the recovered message. 
Let $\AS=\{s| s:\Xscr \rightarrow \Xscr\}$, $\AR=\{ g| g:\Xscr \rightarrow \Xscr\}$ denote the strategy sets of the sender and receiver, respectively. 

When $x\in \Xscr$ is the symbol observed by the sender and $\hat x\in \Xscr$ is the symbol recovered by the receiver, the sender obtains a utility $\ut(\hat x, x)$, where $\ut:\Xscr \times \Xscr \to \Real $ is the utility function of the sender.  Define,
\begin{equation}\label{eq:perfectly_recovered_symbols}
	\Dscr (g,s)=\{x\in \mathcal X | g\circ s(x)=x\},
\end{equation}
as the set of perfectly recovered symbols when the sender and the receiver play $s$ and $g$, respectively. 
We posit that the sender chooses $s \in \AS$ to maximize $ \ut(g(s(x)),x) $ for each $ x $, and the receiver maximizes $|\Dscr(g,s)|$ by choosing  $g \in \AR$. Since the receiver has a deterministic objective and the sender can observe the source $ x \in \Xscr $, the probability distribution of $ x $ has no bearing on the problem so long as it has support $ \Xscr $.


We will study a game where the sender commits to a strategy first and the receiver responds it, and analyze it through the \textit{Stackelberg equilibrium} ~\cite{basar1999dynamic} with the sender as leader and the receiver as the follower.
We will also compare our results to the setting where the receiver is the leader. The latter setting has been studied in ~\cite{vora2022shannon,vora2020zero, vora2020information}. We outline both settings in the sections below.

\subsubsection{{Receiver as the follower: Deterministic strategies}} \label{sec:receiverfollower}
	All terms corresponding to the case with the receiver as the follower will be denoted with the subscript $F$ and all terms corresponding to the case with the receiver as the leader will be denoted with the
	subscript $L$.

The \textit{Stackelberg equilibrium} (S.E.) solution for the model with the receiver as the follower is defined as follows.
\begin{definition}(Stackelberg equilibrium with the receiver as follower) \label{def:SE}
$s^{*} \in \AS$ is a Stackelberg equilibrium strategy of the sender if
\begin{equation} \label{eq:S.E.01}
\begin{split}
\min\limits_{g \in\brr (s^*)}\ut(g\circ s^{*}(x),x) \geq\min \limits_{g \in \brr(s)}\ut(g\circ s(x),x),\\ \forall x\in \Xscr ,\forall s\in  \AS,
\end{split}
\end{equation}
where the \textit{best response set} of the receiver $\brr(s)$ is defined as
\begin{equation} \label{eq:BRR_01}
\brr(s)=\{ g\in \AR: |\Dscr(g,s)|\geq |\Dscr(g',s)|,\ \forall g'   \in \AR \}.
\end{equation}
Any $g^{*} \in \brr(s^{*})$  is referred to as the Stackelberg equilibrium strategy of the receiver. $ (s^*,g^*) $ is referred to as the Stackelberg equilibrium.
\end{definition}
The best response set of the receiver for a strategy $s$ of the sender is a collection of strategies $g$ that recover the maximum number of source symbols correctly.
\begin{definition}(Worst case utility and utility for correct recovery) We refer to the mapping $$x \mapsto \min \limits_{g \in \brr(s)}\ut(g\circ s(x),x)$$ as the \textit{worst case utility} for a strategy $s$ of the sender and the mapping $$x \mapsto \ut(x,x)$$ as the \textit{utility for correct recovery}. 
\end{definition}
The sender's Stackelberg equilibrium strategy is the one that maximizes the worst case utility for each $ x\in \Xscr $. 
\subsubsection{Receiver as follower: Behavioral strategies}\label{sec:intro_behav_strat}
We now introduce the setting where the players play \textit{behavioral strageies}~\cite{basar1999dynamic}. We will end with the surprising conclusion that there may not exist an equilibrium in the space of behavioral strategies. Let $X$ denote the source random variable which takes values from $\Xscr$ with prior probability distribution $p$ known to both  players.  Assume that $ p(x)>0 $ for all $ x\in \Xscr. $ We redefine the strategy sets of the sender and the receiver respectively as sets of conditional distributions as follows: $\AS=\{\pi| \pi : \Xscr \times \Xscr \longrightarrow [0,1], \sum \limits_{y\in \Xscr} \pi(y|x)=1, \forall x\in \Xscr \}$ and $\AR=\{\sigma| \sigma : \Xscr \times \Xscr \longrightarrow [0,1], \sum \limits_{x\in \Xscr}\sigma (x|y)=1, \forall y \in \Xscr \}$. Let $Y \in \Xscr$ be the signal sent by the sender and let $\Yscr(\pi):=\{y\in \Xscr| \exists x \in \Xscr \text{s.t.} \pi(y|x)>0\}$ contain the signals used by the sender when it plays $\pi \in \AS$. Let $\widehat{X}$ be the symbol mapped by the receiver and the set $\Dscr(\sigma, \pi):=\{x\in \Xscr| \Pbb(\hat{X}=x|X=x) =1\}$ be the symbols recovered by the receiver with probability one when the receiver plays $\sigma$ and the sender plays $\pi$. If $ x\in \Dscr(\sigma,\pi )$, then we say $x$ is \textit{recovered correctly}. Note that the joint distribution of $ X,Y,\Xhat $ is given by
\[ \Pbb(x,y,\xhat) = p(x)\pi(y|x)  \sigma(\xhat|y).\]
For any $\pi \in \AS$, for any $x\in \Xscr$ let $E_x = E_x(\pi):=\{y \in \Xscr : \pi(y|x)>0\}$, be the support of $ \pi(\cdot|x) $.  We assume the sender chooses $ \pi $ to maximize $ \Ubar(\pi,\sigma) $ where 
\[ \bar{U} (\pi, \sigma)=\sum \limits_{x \in \Xscr} \sum \limits_{ y \in \Yscr(\pi)} \sum \limits_{\hat{x} \in \Xscr} p(x) \pi(y|x)\sigma (\hat{x}|y)\ut(\hat{x},x), \]
whereas the receiver seeks to maximize the size of $\Dscr(\sigma,\pi).  $

		\begin{definition}(Stackelberg equilibrium for behavioral strategies)
		$\pi^{*} \in \AS$ and $\sigma^{*} \in \AR$ form a Stackelberg equilibrium strategy pair of the sender and the receiver respectively if  \begin{equation}\label{eq:S.E.}
\begin{gathered}
\pi^* \in \arg \max \limits_{\pi \in \AS} \min \limits_{\sigma \in \brr(\pi) } \bar{U}(\pi,\sigma), 
\end{gathered}\end{equation} 
and $ \sigma^* \in \brr(\pi^*)$ where $\brr(\pi)=\{\sigma^* \in \AR|\ |\Dscr(\sigma^*,\pi)|\geq |\Dscr(\sigma,\pi) |, \forall\sigma \in \AR\}$. 
		\end{definition} 

The Bayesian signalling games literature involves the sender playing behavioral strategies; in such games, the use of randomization leads to an obfuscation of the true type of the sender and thereby can result         in a benefit to the sender. This is essentially the idea behind Bayesian persuasion~\cite{kamenica2011bayesian}. In this literature the objective of the receiver is to maximize its expected utility, and it is as such linear in the posterior distribution $ \Pbb(X|Y) $. On the other hand, in our model, the number of signals recovered correctly -- which is the objective of the receiver -- is \textit{not} linear in the posterior distribution. This nonlinearity has dramatic consequences. It implies that the posterior distribution is not a sufficient statistic for the receiver's problem, and hence the intuition of strategic obfuscation from the Bayesian persuasion literature does not carry over to our setting. Moreover, the nonlinearity brings with it an inherent discontinuity, owing to which a Stackelberg equilibrium is not guaranteed to exist, even for rather simple examples.

\begin{theorem}(Existence of S.E. is not guaranteed)\label{theorem:existence_not_guranteed}
There need not exist a S.E. in behavioral strategies.
\end{theorem}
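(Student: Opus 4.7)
I would prove Theorem~\ref{theorem:existence_not_guranteed} by exhibiting a utility function and prior for which the sender's value $V(\pi) := \min_{\sigma \in \brr(\pi)} \bar{U}(\pi, \sigma)$ admits a supremum on $\AS$ that is not attained. Take $\Xscr = \{0, 1, 2\}$ with the uniform prior, and set $\ut(2,0) = \ut(2,2) = 3$, $\ut(1,1) = 1$, and $\ut(\hat x, x) = 0$ otherwise; types $0$ and $2$ both strictly prefer being decoded as $2$, while type $1$ prefers its own label. Since $\bar{U}(\pi, \sigma) \leq (1/3)\sum_x \max_{\hat x} \ut(\hat x, x) = 7/3$ pointwise, $V(\pi) \leq 7/3$ for every $\pi$.

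First, I would construct a sequence $\pi_\epsilon$ with $V(\pi_\epsilon) \to 7/3$. Take $\pi_\epsilon(0|0) = 1-\epsilon$, $\pi_\epsilon(1|0) = \epsilon$, $\pi_\epsilon(1|1) = 1$, $\pi_\epsilon(0|2) = 1$ for $\epsilon \in (0, 1)$. A case analysis of the recovery equation $\sum_y \pi_\epsilon(y|x)\sigma(x|y) = 1$ shows that recovering $x = 0$ forces $\sigma(0|0) = \sigma(0|1) = 1$, which clashes with the $\sigma(1|1) = 1$ required to recover $x = 1$; hence $\max_\sigma |\Dscr(\sigma, \pi_\epsilon)| = 2$, achieved uniquely by $\sigma^*(1|1) = \sigma^*(2|0) = 1$. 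Direct substitution gives $V(\pi_\epsilon) = 7/3 - \epsilon$, so $\sup_\pi V(\pi) = 7/3$.

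The main step, and the main obstacle, is to show that $V(\pi) < 7/3$ for every $\pi \in \AS$, not merely for the sequence above. Let $S_x := \{y : \pi(y|x) > 0\}$; the equality $\bar{U}(\pi, \sigma) = 7/3$ is equivalent to the condition $(\star)$ that $\sigma(2|y) = 1$ on $S_0 \cup S_2$ and $\sigma(1|y) = 1$ on $S_1$, which further forces $S_1 \cap (S_0 \cup S_2) = \emptyset$. Assuming for contradiction $V(\pi) = 7/3$, every $\sigma \in \brr(\pi)$ must satisfy $(\star)$. If $S_0 \cap S_2 = \emptyset$, then all $S_x$ are pairwise disjoint, so $\max |\Dscr| = 3$ and the unique best response gives $\bar{U} = (1/3)(\ut(0,0) + \ut(1,1) + \ut(2,2)) = 4/3 < 7/3$. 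If instead $S_0 \cap S_2 \neq \emptyset$, then $\max |\Dscr| \leq 2$, and defining $\sigma'$ by $\sigma'(0|y) = 1$ on $S_0$ and $\sigma'(1|y) = 1$ on $S_1$ (compatible by $(\star)$, since it forces $S_0 \cap S_1 = \emptyset$) gives $|\Dscr(\sigma', \pi)| = 2$, so $\sigma' \in \brr(\pi)$; choosing $\sigma'(\cdot|y)$ on $S_2 \setminus S_0$ to minimize $\sum_{\hat x}\sigma'(\hat x|y)\ut(\hat x, 2)$ then yields $\bar{U}(\pi, \sigma') = 1/3 < 7/3$. Both cases contradict $V(\pi) = 7/3$.

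Combining the steps, $\sup_\pi V(\pi) = 7/3$ but no $\pi \in \AS$ attains it, so no S.E.\ in behavioral strategies exists. The crux is sidestepping enumeration over the infinite-dimensional simplex $\AS$: the support-disjointness implied by $(\star)$ simultaneously restricts the feasible $\pi$ and exposes an adversarial alternative decoder inside $\brr(\pi)$ that drives $\bar{U}$ strictly below the pointwise bound $7/3$.
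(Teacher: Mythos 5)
Your proposal is correct, and it establishes the theorem by a genuinely different route from the paper. The paper's counterexample reuses the skew-symmetric utility $\ut_3$ from Example 4.1 with a uniform prior, and proceeds by an exhaustive classification of all sender strategies $\pi$ according to the size and structure of $\Dscr(\sigma,\pi)$ (classes A, B, C.(a), C.(b) and their sub-cases), computing $\min_{\sigma\in\brr(\pi)}\bar U(\pi,\sigma)$ in each class to show the supremum $1/3$ is approached only within class C.(b) and that any $\pi$ attaining it would necessarily degenerate into class C.(a), where the value drops to $-1/3$. You instead pick a simpler, non-symmetric utility in which types $0$ and $2$ pool on the label $2$, derive a pointwise upper bound $7/3$ on $\bar U$, exhibit a sequence $\pi_\epsilon$ approaching it, and then rule out attainment via the single structural observation $(\star)$: attaining $7/3$ against every best response forces support disjointness of $S_1$ from $S_0\cup S_2$, after which either all supports are disjoint (value $4/3$) or $S_0\cap S_2\neq\emptyset$ and an adversarial decoder $\sigma'\in\brr(\pi)$ recovering $\{0,1\}$ drives the value to $1/3$. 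I verified the computations: $V(\pi_\epsilon)=7/3-\epsilon$, the best-response sets are as you claim, and the two cases are exhaustive. Both proofs exploit the same underlying phenomenon --- the discontinuous expansion of $\brr(\pi)$ as the support of $\pi$ degenerates --- but your argument buys a much shorter, non-enumerative proof at the cost of a bespoke utility, whereas the paper's choice of $\ut_3$ buys the narrative point that the very utility yielding full revelation under deterministic play admits no equilibrium under behavioral play. One cosmetic note: in your Case 1 the best response is unique only on signals in $\Yscr(\pi)$, but since $\bar U$ is unaffected by $\sigma(\cdot|y)$ for unused $y$, this does not affect the conclusion.
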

A concrete counter-example with the nonexistence of the Stackelberg equilibrium is presented in Section \ref{sec:random}. This example also illustrates the discontinuity discussed above. 

As a result of Theorem~\ref{theorem:existence_not_guranteed}, we conclude that the sender must play deterministic strategies. Our analysis is therefore limited to the case where both players play deterministic strategies.
\subsubsection{Receiver as the leader}
We now recall the model in ~\cite{vora2022shannon,vora2020zero}, where the receiver is the leader and the sender follows.
Let $ \AR' = \{g| g: \Xscr \rightarrow \Xscr \cup \{\Delta\}\}$ where $ \Delta $ is such that $ \ut(\Delta,x)=-\infty $ for all $ x\in \Xscr. $ The symbol $ \Delta $ denotes a punishment action for the receiver, and is introduced for convenience. More details on this can be found in ~\cite{vora2022shannon,vora2020zero}. 
The Stackleberg equilibrium for this setting is defined as follows.

\begin{definition}(~\cite{vora2022shannon,vora2020zero} Stackelberg equilibrium  with the receiver as the leader)
$g^{*}\in \AR'$ is a Stackelberg equilibrium strategy of the receiver if 
\begin{equation} \label{eq:S.E.02}
g^{*}\in \arg \max\limits_{g\in \AR} \min \limits_{s\in \brs(g)} |\Dscr(g,s)|,
\end{equation} where the \textit{best response set} of the sender $\brs(g)$ is given by
\begin{equation}
    \begin{split}
        \brs(g )=\{s\in \AS |\ut(g\circ s(x),x) \geq \ut(g\circ s' (x),x),\\
        \forall x\in \Xscr,\forall s' \in \AS\}.
    \end{split}
\end{equation}
Any $s^{*} \in \brs(g^{*})$ is referred to as the Stackelberg equilibrium strategy for the sender. The pair $ (s^*,g^*) $ is referred to as the Stackelberg equilibrium.
\end{definition}

The setting with the receiver as the leader defined here is an elementary version of the setting developed in ~\cite{vora2022shannon,vora2020zero}. The model in ~\cite{vora2022shannon,vora2020zero} is for $n$-blocklength sequence generated at the source where $n\geq 1$, while we focus on the case $n=1.$ We outline the results in ~\cite{vora2022shannon,vora2020zero} below.
\subsection{Prior results on the model with receiver as leader} \label{sec:2prior}
\noindent The model in ~\cite{vora2022shannon,vora2020zero} focuses on a block-communication setting with a strategic sender and a receiver communicating through a noisy or noiseless channel. $\AS' =\{s_n| s_n:\Xscr^n \rightarrow \Xscr^n\}$ and $\AR' =\{ g_n| g_n:\Xscr^n \rightarrow \Xscr^n \bigcup \{\Delta\}\}$ are the strategy set of the sender and receiver respectively where $\Delta $ is a penalty term. The source generates a sequence of symbols $X=(X_1,X_2,...X_n) \in \Xscr^n$  which is observed by the sender. After observing input $X\in \Xscr^n$, the sender encodes $X$ using the strategy $s_n \in \AS'$ such that $X$ is received by the receiver as $s_n(X)=Y \in \Xscr^n$ (output of the channel). The output of the channel is then  decoded by the receiver by using $g_n \in \AR'$. This results in recovering  $X$ as  $g_n(Y)=\hat {X}$, where $\hat X\in \Xscr^n$.

The set $\Dscr'(g_n , s_n )$ is the sequence equivalent of $\Dscr(g,s)$  which contains all those sequences $X\in \Xscr^n$ which gets recovered perfectly when $g_n$ and $s_n$ are the respective strategies of the receiver and the sender. The utility  function of the sender here is an $n$-block utility function $\ut_n :\Xscr^n \cup\{\Delta\}\times \Xscr^n \to \Rbb $  where $\ut_n(\hat{x},x)=\frac{1}{n}\sum\limits_{i=1}^n \ut(\hat{x}_i, x_i), \forall x,\hat{x} \in \Xscr^n$ and $ \ut_n(\Delta,x) =-\infty$ for all $ x\in \Xscr^n $. The goal in ~\cite{vora2022shannon,vora2020zero} is to characterize  the \textit{information extraction capacity} of the sender, defined as \[\Xi(\ut) :=\lim_{n\rightarrow \infty} \min \limits_{s_n \in \brs(g_n )} |\Dscr'(g_n ,s_n )|^{1/n},\] which is the asymptotic growth rate of the number of sequences correctly recovered, or equivalently the maximum amount of information that can be extracted from a sender acting as follower. In order to capture this quantity,  a graph called the \textit{weak sender graph} $G^n_L =(\Xscr^n, E_L)$ is introduced. The independence number of the graph $G^n_L$ is denoted by $\alpha(G^n_L)$. ~\cite{vora2022shannon,vora2020zero} showed that for any $ n $, the maximum  number of sequences that the receiver can correctly recover is  $\alpha(G^n_L)$, and that 
\begin{equation}
\Xi(\ut)= \lim_{n} {\alpha(G^n_L)^{\frac{1}{n}}}.
\end{equation}
We will use a simpler version of this result for $ n=1:$
\begin{theorem}(Maximum information theorem) \label{Theorem:Anuj}
For $n=1$, the maximum number of symbols the receiver can recover from the sender is $\alpha(G_L)$. 
\end{theorem}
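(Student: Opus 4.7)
The plan is to establish the equality $\max_{g\in\AR'}\min_{s\in\brs(g)}|\Dscr(g,s)| = \alpha(G_L)$ by matching upper and lower bounds. I will work with the natural ``confusability'' definition of the weak sender graph $G_L$, namely that distinct $x_1, x_2\in\Xscr$ are adjacent whenever $\ut(x_1,x_2)\geq \ut(x_2,x_2)$ or $\ut(x_2,x_1)\geq \ut(x_1,x_1)$; two vertices are non-adjacent precisely when neither sender type (weakly) prefers being recovered as the other over being recovered truthfully.

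For the upper bound, I would show that every $g\in\AR'$ admits some best response $s^*\in\brs(g)$ with $\Dscr(g,s^*)$ independent in $G_L$, which yields $\min_{s\in\brs(g)}|\Dscr(g,s)|\leq |\Dscr(g,s^*)|\leq \alpha(G_L)$. Starting from an arbitrary $s\in\brs(g)$, if $x_1,x_2\in\Dscr(g,s)$ are adjacent with (WLOG) $\ut(x_2,x_1)\geq \ut(x_1,x_1)$, I would modify $s$ to $s'$ by resetting $s'(x_1):=s(x_2)$ and leaving all other images unchanged. The payoff of type $x_1$ under $s'$ is $\ut(g(s(x_2)),x_1)=\ut(x_2,x_1)$, which is at least the previous optimum $\ut(x_1,x_1)$, so $s'$ remains in $\brs(g)$; and since only $x_1$'s image was altered, one checks directly that $\Dscr(g,s')=\Dscr(g,s)\setminus\{x_1\}$. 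Iterating this ``trimming'' step terminates (as $|\Dscr|$ strictly decreases at each step) with the desired independent recovery set.

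For the lower bound, I would fix a maximum independent set $I\subseteq \Xscr$ of $G_L$ and construct $g^*\in\AR'$ by $g^*(y)=y$ for $y\in I$ and $g^*(y)=\Delta$ for $y\notin I$. The penalty $\ut(\Delta,x)=-\infty$ forces every sender best response $s$ to satisfy $s(x)\in I$ for all $x\in\Xscr$. For each $x\in I$, independence of $I$ gives the strict inequality $\ut(x',x) < \ut(x,x)$ for every $x'\in I\setminus\{x\}$, making $s(x)=x$ the unique maximizer. Hence $I\subseteq \Dscr(g^*,s)$ for \emph{every} $s\in\brs(g^*)$, and therefore $\min_{s\in\brs(g^*)}|\Dscr(g^*,s)|\geq |I|=\alpha(G_L)$.

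The main obstacle is the trimming argument underlying the upper bound: one must confirm that the local reassignment does not break optimality for type $x_1$ (which is precisely why the edge condition in $G_L$ is defined the way it is) and does not introduce spurious recoveries (which follows since only $s(x_1)$ is altered). The lower bound, by contrast, leans on the availability of the penalty symbol $\Delta$ in $\AR'$, which is exactly what lets the receiver ``prune'' the sender's feasible signals down to those landing in $I$ and thereby separate independent types by strict utility gaps.
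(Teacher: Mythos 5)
Your argument is correct and self-contained. Note, however, that the paper itself gives no proof of this theorem: it is recalled from the prior receiver-as-leader works \cite{vora2022shannon,vora2020zero} (which establish the general $n$-blocklength statement), so there is no in-paper proof to compare against. Your two-sided argument is exactly the natural one for $n=1$ and it holds up under scrutiny. For the converse, the trimming step is sound: if $x_1,x_2\in\Dscr(g,s)$ with $\ut(x_2,x_1)\geq\ut(x_1,x_1)$, then since $x_1\in\Dscr(g,s)$ and $s\in\brs(g)$ the value $\ut(x_1,x_1)$ is already the maximum of $y\mapsto\ut(g(y),x_1)$, so the reassignment $s'(x_1):=s(x_2)$ attains that same maximum (indeed $\ut(x_2,x_1)=\ut(x_1,x_1)$), keeps $s'\in\brs(g)$, and removes exactly $x_1$ from the recovery set; finiteness gives termination at an independent recovery set. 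For the achievability direction, the key point you correctly exploit is that non-adjacency in $G_L$ is the conjunction of two \emph{strict} inequalities ($\ut(x',x)<\ut(x,x)$ and $\ut(x,x')<\ut(x',x')$), so on a maximum independent set $I$ truthful signalling is the \emph{unique} best response for every type in $I$ once the receiver plays $g^*(y)=y$ on $I$ and $\Delta$ off $I$; this is precisely the role of the punishment symbol in $\AR'$. The only cosmetic caveat is that the paper's display \eqref{eq:S.E.02} writes the outer maximization over $\AR$ while declaring $g^*\in\AR'$; your reading (maximize over $\AR'$, i.e., with $\Delta$ available) is the intended one and is needed for the lower bound.
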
 
With slight abuse of notation we define the information extraction capacity  of $ \ut $ for $ n =1$ as  $\Xi(\ut):= \alpha(G_L)$.


\section{Stackelberg equilibrium of the game}\label{sec4}
This section characterizes the S.E. of the model in Section~\ref{sec:receiverfollower}. We first establish the structure of the best response of the receiver to the sender's strategy. Through this we characterize the minimum number of symbols recovered by the receiver at any Stackelberg equilibrium via the \textit{vertex clique cover number} of a suitably defined \textit{strong sender graph}. 

\subsection{Best response of the receiver}\label{sec3}
In this section we study the best response of the receiver to a strategy of the sender.
Say the sender plays a strategy $s$ which maps exactly $m$ elements to the same element. Then the receiver can recover at most one symbol out of these $m$ symbols. To describe the best response of the receiver, we introduce the concept of a \textit{receiver dilemma set}.

	Let $s\in\AS$ and $i$ be any element from $\Range(s)$. The \textit{pre-image set} of $i\in \Range(s)$ denoted by $P_i (s)$,  
	\begin{equation} \label{eq:preimage}
		P_{i} (s):= \{y\in \Xscr| s(y)=i\}.
	\end{equation}
Note that every strategy $s$ partitions the source alphabet $\Xscr$ into a set of pre-image sets. We call this partition, the  \textit{receiver dilemma set}.
\begin{definition}(Receiver dilemma set)
	The \textit{receiver dilemma set} of the strategy $s$ of the sender denoted by $\Pscr(s)$ is defined as,
	\begin{equation}\label{eq:receiversdilemmaset}
		\Pscr (s):=\{P_{i} (s):i\in \Range(s)\}.
	\end{equation}
\end{definition}

The following lemma demonstrates that for any choice $s\in\AS$ of the sender, the receiver's best response is to choose those strategies $g\in \AR$ which map every image under $s$ to one of its pre-images. This is equivalent to choosing those strategies which yield $|\Dscr(g,s)|=|\Pscr (s)|.$

\begin{lemma}\label{lemma:bestresponsetomanyone}
	If the sender plays any  $s\in \AS$ then
	\begin{enumerate}
		\item (Symbols recovered) The best response set of the receiver is characterized as follows
		\begin{equation}\label{eq:bestresponsetomanyone-01}
			\brr(s) =\{g\in \AR:|\Dscr(g,s)|=|\Pscr (s)|\}
		\end{equation} 
		\item (Best response set) Equivalently, the best response set of the receiver is given by
		\begin{equation}\label{eq:bestresponsetomanyone-02}
			\brr(s)=\{g\in \AR|\ g(i)\in P_{i}(s), \forall i\in \Range(s)\}.
		\end{equation}
		
	\end{enumerate}
\end{lemma}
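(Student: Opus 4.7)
The plan is to prove both parts simultaneously by first establishing a universal upper bound $|\Dscr(g,s)| \leq |\Pscr(s)|$ and then characterizing exactly when equality is attained. Part (2) will then follow as the condition for equality, and part (1) will follow because the upper bound is achievable.

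The key observation is that the pre-image sets $\{P_i(s) : i \in \Range(s)\}$ form a partition of $\Xscr$, and the receiver's output depends only on the received signal. Thus for a fixed $i \in \Range(s)$, every $x \in P_i(s)$ is mapped by the composition $g\circ s$ to the single value $g(i)$. An element $x \in P_i(s)$ lies in $\Dscr(g,s)$ iff $g(i)=x$, so at most one element of each $P_i(s)$ can be correctly recovered. Summing over the partition gives
\[ |\Dscr(g,s)| \;=\; \sum_{i \in \Range(s)} |\Dscr(g,s) \cap P_i(s)| \;\leq\; \sum_{i \in \Range(s)} 1 \;=\; |\Pscr(s)|. \]

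Next I will show equality holds iff $g(i) \in P_i(s)$ for every $i \in \Range(s)$. For the ``if'' direction, if $g(i) \in P_i(s)$ for each $i$, then the specific element $x^\ast := g(i)$ satisfies $s(x^\ast)=i$ and $g\circ s(x^\ast) = g(i) = x^\ast$, contributing exactly one recovered symbol per pre-image set, so $|\Dscr(g,s)| = |\Pscr(s)|$. For the ``only if'' direction, suppose $g(i_0) \notin P_{i_0}(s)$ for some $i_0 \in \Range(s)$. Then for every $x \in P_{i_0}(s)$ we have $g\circ s(x) = g(i_0) \neq x$, so $\Dscr(g,s) \cap P_{i_0}(s) = \emptyset$, which strictly reduces the sum above and gives $|\Dscr(g,s)| \leq |\Pscr(s)| - 1$.

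Finally, strategies $g$ achieving the upper bound exist trivially: pick any $g(i) \in P_i(s)$ for $i \in \Range(s)$ and define $g$ arbitrarily on $\Xscr \setminus \Range(s)$. Hence the maximum in the definition \eqref{eq:BRR_01} of $\brr(s)$ equals $|\Pscr(s)|$, and $\brr(s)$ is precisely the set of $g$ attaining this maximum. This simultaneously yields the characterization in \eqref{eq:bestresponsetomanyone-01} and, via the equality condition just proved, the equivalent description in \eqref{eq:bestresponsetomanyone-02}. The argument is almost entirely bookkeeping on the partition induced by $s$; the only mildly delicate point is verifying that when $g(i) \notin P_i(s)$ the entire block $P_i(s)$ is lost, which I expect to be the step requiring the most care.
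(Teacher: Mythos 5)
Your proof is correct and follows essentially the same route as the paper's: bound $|\Dscr(g,s)|$ by one recovered symbol per pre-image set, note equality holds exactly when $g(i)\in P_i(s)$ for all $i\in\Range(s)$, and observe such $g$ exist. The only cosmetic difference is that you obtain the upper bound by summing over the partition while the paper bounds $|\Dscr(g,s)|$ via $|\Range(g\circ s)|\leq|\Pscr(s)|$; the substance is identical.
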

\begin{proof} Consider any map $s\in \AS$. Since $ |\Range(g\circ s)|\leq |\Pscr (s)| $ for any $ g \in \AR $, we have  
	\[|\Dscr(g,s)|\leq |\Pscr (s)|,\forall g\in \AR.\]  	
	Consider $g \in \AR$ which maps each received signal $i \in \Range(s)$ to one element in $P_{i}(s)$. Clearly for this choice of $g  $, we have $ |\Dscr(g,s)| =|\Pscr (s)|. $ This proves \eqref{eq:bestresponsetomanyone-01}. 
	
	The above proof also shows that the RHS of \eqref{eq:bestresponsetomanyone-02} is included in $ \brr(s).$ 
	We now show that $ \brr(s) $ is contained in the RHS of \eqref{eq:bestresponsetomanyone-02}. To this end, we first show that no strategy of the receiver can recover more than one symbol from any pre-image set. 
	Let $ i\in \Range(s), $ and say there exists a $g$ which recovers distinct $x,y\in P_i(s)$ correctly. Then, $g\circ s(x)=x$ and $g\circ s(y)=y$, while $ s(x)=s(y)=i$. This gives $g(i)=x=y$, a contradiction. Thus for any strategy $ g \in \AR, $ the receiver can recover at most one symbol in $P_{i}(s)$ for each $ i \in \Range(s). $  Now, if $ g \in \brr(s)$ is such that  $g(i)\notin P_i (s)$ for some $ i\in \Range(s), $ then for all $ x\in P_i(s),$  we have $g\circ s(x) \neq x.$ Thus such a $ g $ can not recover any symbol in $P_i (s)$, whereby for such a $ g $ 
	$|\Dscr(g,s)|<|\Pscr (s)|$, again a contradiction to $ g\in \brr(s) $. Hence, every $ g\in \brr(s) $ satisfies the RHS of \eqref{eq:bestresponsetomanyone-02}. \end{proof}
\subsection{Characterization of the Stackelberg equilibrium}

In Lemma \ref{lemma:bestresponsetomanyone} we showed that the best response  set of the receiver $\brr(s)$ for any strategy $s$ of the sender, consists of those strategies of the receiver which map every $i\in \Range(s)$ to an element in  $P_i (s)$.




 In the following proposition, we show that the worst case utility for any S.E. strategy is identically equal to the utility for correct recovery.
\begin{proposition}(Worst case utility is the utility for correct recovery)
\label{proposition:wcu_is_correct_recovery}
If $s^*$ is a S.E. strategy of the sender then 
\begin{equation}\label{eq:wcu_for_correct_recovery01}
  \min \limits_{g \in \brr(s^{*})}\ut(g\circ s^*(x), x)=\ut(x,x), \quad \forall \ x\in \Xscr.  
\end{equation}
\end{proposition}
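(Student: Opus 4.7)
The plan is to sandwich the worst-case utility at any S.E. strategy between $\ut(x,x)$ from above and $\ut(x,x)$ from below, yielding equality.

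For the upper bound, I would invoke the structural characterization of $\brr(s^*)$ from Lemma~\ref{lemma:bestresponsetomanyone}, part 2. Fix any $x\in\Xscr$ and set $i=s^{*}(x)\in\Range(s^{*})$. Since $x\in P_i(s^{*})$, there exists at least one $g\in\brr(s^{*})$ with $g(i)=x$, and for this $g$ we have $g\circ s^{*}(x)=x$, yielding utility $\ut(x,x)$. Because the minimum over $\brr(s^{*})$ is no larger than the value at this particular $g$, we obtain
\[
\min_{g\in\brr(s^{*})} \ut(g\circ s^{*}(x),x)\;\le\;\ut(x,x), \qquad \forall x\in\Xscr.
\]

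For the lower bound, I would compare $s^{*}$ against the identity strategy $s_{\mathrm{id}}\in\AS$ defined by $s_{\mathrm{id}}(y)=y$. Every pre-image set $P_i(s_{\mathrm{id}})$ is the singleton $\{i\}$, so by Lemma~\ref{lemma:bestresponsetomanyone} part 2, the unique member of $\brr(s_{\mathrm{id}})$ acts as the identity on $\Range(s_{\mathrm{id}})=\Xscr$. Consequently, the worst-case utility of $s_{\mathrm{id}}$ at each $x$ is precisely $\ut(x,x)$. By the defining inequality \eqref{eq:S.E.01} of the Stackelberg equilibrium applied with $s=s_{\mathrm{id}}$,
\[
\min_{g\in\brr(s^{*})} \ut(g\circ s^{*}(x),x)\;\ge\; \min_{g\in\brr(s_{\mathrm{id}})} \ut(g\circ s_{\mathrm{id}}(x),x)\;=\;\ut(x,x),
\]
for every $x\in\Xscr$. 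Combining the two bounds gives \eqref{eq:wcu_for_correct_recovery01}.

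There is no serious obstacle here; the proposition falls out of a clean two-sided bound once one recognizes that (i) the identity strategy furnishes a certificate that the value $\ut(x,x)$ is always achievable at equilibrium, and (ii) the receiver always has a best response in $\brr(s^{*})$ that correctly recovers any chosen $x$, which prevents the worst-case utility from ever exceeding $\ut(x,x)$. The only subtlety worth flagging explicitly in the write-up is the use of Lemma~\ref{lemma:bestresponsetomanyone} part 2 in both directions, as this is where the freedom of the receiver to pick any pre-image is being exploited.
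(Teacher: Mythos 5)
Your proof is correct and follows essentially the same route as the paper's: an upper bound obtained from the structure of $\brr(s^*)$ (the receiver can always choose the pre-image $x$ itself, so the worst case cannot exceed $\ut(x,x)$), and a lower bound obtained by comparing $s^*$ against a one-to-one strategy via the defining inequality \eqref{eq:S.E.01}. The only cosmetic difference is that the paper invokes an arbitrary one-to-one map and phrases the upper bound as $\min_{\hat{x}\in P}\ut(\hat{x},x)\le\ut(x,x)$, whereas you use the identity map and exhibit an explicit $g\in\brr(s^*)$; these are interchangeable.
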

\begin{proof}
Consider any strategy of the sender $s\in \AS.$ Fix a $P\in \Pscr(s)$ and let $x\in P.$ In Lemma~\ref{lemma:bestresponsetomanyone}, we observed that every $ g \in \brr(s)$ maps $ s(x) $ to a $\hat{x}\in P$. Thus we have,
\begin{equation}\label{eq:invariance}
	\min\limits_{g \in \brr(s )}\ut(g \circ s (x), x)=\min\limits_{\hat{x} \in P}\ut(\hat{x}, x),\ \  \forall x \in P. 	
\end{equation} 
Since $ P $ is arbitrary this holds for all $ P \in \Pscr(s).$ 
Noting that $\min\limits_{\hat{x} \in P}\ut(\hat{x},x) \leq  \ut(x,x),\forall x\in P,$  we have
\begin{equation}\label{bound_for_min_ut}
\begin{split}
\hspace{-8mm}\min\limits_{g \in\brr (s)}\ut(g\circ s^{}(x),x) \leq  \ut(x,x), \forall x\in P, \forall P \in \Pscr(s).
\end{split}
\end{equation}
Note that if $s_\circ \in \AS$ is a one-to-one mapping then every set $P$ in $\Pscr(s_\circ)$ is a singleton set and $\brr(s_\circ)=\{s_{\circ} ^{-1}\}$. Thus, for all one-to-one mappings $s_\circ \in \AS$,
\begin{equation}\label{eq:one-one_utility}
\min \limits_{g \in \brr(s_\circ)}\ut(g\circ s_{\circ}(x),x)=\ut(x,x), \forall x\in \Xscr.
\end{equation}
 Hence, every one-to-one mapping attains the bound in \eqref{bound_for_min_ut}. And consequently, if $s=s^{*}\in \AS$ is a S.E. strategy, then equality must hold in \eqref{bound_for_min_ut}, which proves \eqref{eq:wcu_for_correct_recovery01}.
\end{proof}

In the next theorem we give a structural characterization of S.E. strategies of the sender and receiver.
\begin{theorem}(Characterization of the S.E.)
\label{Theorem:Receiver_dilemma_set}
Consider a sender with a utility $ \ut $.  $s^*\in \AS$ is a S.E. strategy of the sender if and only if 
\begin{equation}\label{eq:Receiver_dilemma_set}
\ut(x,x)\leq \ut(\hat{x},x), \quad \forall x, \hat{x} \in P ,\  \forall P \in \Pscr(s^*).
\end{equation}
Moreover, for the receiver, $ |\Dscr(g,s^*)| = |\Pscr(s^*)| $ for all $ g \in \brr(s^*). $
\end{theorem}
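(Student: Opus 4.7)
The plan is to derive the theorem as a direct consequence of Proposition~\ref{proposition:wcu_is_correct_recovery} together with the intermediate identity used in its proof, namely equation~\eqref{eq:invariance}, which states that for any $s \in \AS$, any $P \in \Pscr(s)$ and any $x \in P$,
\[
\min_{g \in \brr(s)} \ut(g \circ s(x), x) \;=\; \min_{\hat{x} \in P} \ut(\hat{x}, x),
\]
together with the pointwise upper bound $\min_{g \in \brr(s)} \ut(g \circ s(x), x) \leq \ut(x,x)$ from \eqref{bound_for_min_ut}. The "moreover" clause is immediate from Lemma~\ref{lemma:bestresponsetomanyone}.

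For the forward direction, I would suppose $s^*$ is a S.E. strategy and fix any $P \in \Pscr(s^*)$ and any $x \in P$. Proposition~\ref{proposition:wcu_is_correct_recovery} gives $\min_{g \in \brr(s^*)} \ut(g \circ s^*(x), x) = \ut(x,x)$, while applying \eqref{eq:invariance} to $s=s^*$ rewrites the left-hand side as $\min_{\hat{x}\in P} \ut(\hat{x}, x)$. Hence $\min_{\hat{x}\in P} \ut(\hat{x}, x) = \ut(x,x)$, which is exactly $\ut(x,x)\leq \ut(\hat{x},x)$ for all $\hat{x}\in P$, establishing \eqref{eq:Receiver_dilemma_set}.

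For the converse, I would assume $s^*$ satisfies \eqref{eq:Receiver_dilemma_set}. Then for any $P\in\Pscr(s^*)$ and $x\in P$, the hypothesis gives $\min_{\hat{x}\in P}\ut(\hat{x},x) = \ut(x,x)$, and \eqref{eq:invariance} applied at $s=s^*$ yields $\min_{g\in\brr(s^*)} \ut(g\circ s^*(x), x) = \ut(x,x)$ for every $x\in\Xscr$. On the other hand, from \eqref{bound_for_min_ut} applied to an arbitrary $s\in\AS$ we have $\min_{g\in\brr(s)}\ut(g\circ s(x),x) \leq \ut(x,x)$ for every $x$. Combining these two facts yields
\[
\min_{g\in\brr(s^*)}\ut(g\circ s^*(x),x) \;\geq\; \min_{g\in\brr(s)}\ut(g\circ s(x),x), \quad \forall x\in\Xscr,\ \forall s\in\AS,
\]
which is exactly the Stackelberg condition \eqref{eq:S.E.01}. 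The "moreover" claim is then just the assertion $|\Dscr(g,s)|=|\Pscr(s)|$ from \eqref{eq:bestresponsetomanyone-01} specialized to $s=s^*$.

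I do not anticipate a serious obstacle: the theorem is essentially a repackaging of the already-proved Proposition~\ref{proposition:wcu_is_correct_recovery} as an equivalence. The only subtlety worth being careful about is the direction of the S.E.\ inequality: one must remember that the sender maximizes pointwise in $x$ over the receiver's worst-case response, and that $\ut(x,x)$ is a \emph{universal} ceiling on the worst-case utility achievable at $x$ (via any one-to-one map, as noted in \eqref{eq:one-one_utility}), so achieving this ceiling at every $x$ is sufficient for optimality.
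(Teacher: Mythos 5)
Your proposal is correct and follows essentially the same route as the paper's own proof: both directions rest on Proposition~\ref{proposition:wcu_is_correct_recovery} together with the identity \eqref{eq:invariance} and the bound \eqref{bound_for_min_ut}, and the ``moreover'' clause is read off from Lemma~\ref{lemma:bestresponsetomanyone}. No gaps.
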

\begin{proof}
($\Rightarrow $)  
Let $ s^* \in \AS $ be a S.E. strategy of the sender. Then from Proposition \ref{proposition:wcu_is_correct_recovery} it is clear that  \eqref{eq:Receiver_dilemma_set} holds. 

 ($\Leftarrow$) We now show the converse. Let $s^*$ be any strategy of the sender such that \eqref{eq:Receiver_dilemma_set} holds.  Let $ P \in \Pscr(s^*) $ and $ x\in P $. 
For any $s' \in \AS, $ we have 
\begin{eqnarray}
\hspace{-7mm}\min\limits_{g\in \brr(s')}\ut(g \circ s' (x), x) &\leq & \ut(x,x), \label{eq:1} \\
&=& \min\limits_{\hat{x}\in P}\{\ut(\hat{x},x)\}, \label{eq:2}\\
&=&\min\limits_{g\in \brr(s^*)}\ut(g\circ s^*(x),x),\label{eq:3}
\end{eqnarray}
where \eqref{eq:1} follows from  \eqref{bound_for_min_ut}, \eqref{eq:2} follows from \eqref{eq:Receiver_dilemma_set} and \eqref{eq:3} follows Proposition \ref{proposition:wcu_is_correct_recovery}. Since this holds for every $ x\in P $ and every $ P\in \Pscr(s^*) $, combining the above three equations we get
\begin{equation} \label{eq:Receiver_dilemma_set01}
\begin{split}
    \min\limits_{g\in \brr(s')}\ut(g \circ s' (x), x)\leq \min\limits_{g\in \brr(s^*)}\ut(g\circ s^*(x),x),\\ \forall x\in\Xscr,\forall s' \in \AS.
    \end{split}
\end{equation}
Consequently, the strategy $s^*$ is a S.E. strategy of the sender. 

And using Lemma \ref{lemma:bestresponsetomanyone}, we can conclude that the receiver can recover  at most $ |\Dscr(g,s^*)| = |\Pscr(s^*)| $ symbols for all $ g \in \brr(s^*). $
\end{proof}

The above theorem shows that in a S.E., the sender partitions the source alphabet $\Xscr$ into subsets (the partition being the receiver dilemma set) such that for any $ x,x' $ belonging to the same subset we have $ \ut(x,x) \leq  \ut(x',x) $. It maps all symbols in a subset to the same signal. Thus, upon seeing any $ x$ from an subset $ P $ in the partition, the sender has least preference for $ x $ itself to be recovered correctly amongst all elements in $ P$. On the other hand the receiver attempts to recover the source correctly, whereby the \textit{worst case utility} $\min\limits_{x'\in P}\ut(x',x)$ for the sender becomes the \textit{utility for correct recovery} $\ut(x,x)$ for every $ x\in \Xscr $.  
	
The above theorem also shows that  any one-to-one mapping $ s^* \in \AS$ is an equilibrium strategy for the sender, since for such an $ s^* $, $ \Pscr(s^*) $ is a collection of $ q $ singleton sets and \eqref{eq:Receiver_dilemma_set} holds trivially. Moreover, if the condition in \eqref{eq:Receiver_dilemma_set} is not satisfied for any $ P\subset \Xscr $ with at least two symbols,  then   there does not exist any many-to-one S.E. strategy of the sender. Thus, in such cases even a  sender whose utility is not aligned with the objective of the receiver, \ie, a sender who prefers incorrect recovery, would have to resort to playing a one-to-one mapping at Stackelberg equilibrium. We will see an example demonstrating this  later in the paper.

There are multiple S.E. strategies $ s^* $ that induce the same receiver dilemma set. However, the above theorem shows that the utility obtained by both the sender and the receiver in equilibrium is completely characterized by the receiver dilemma set\footnote{In general, there could be multiple $s\in \AS$ with identical receiver's dilemma set $ \Pscr(s)$. But the R.H.S. of \eqref{eq:invariance} is independent of $s$. Hence, any two strategies of the sender with identical receiver's dilemma sets, have the same worst case utility.}. Thus the outcome of the game can be understood by studying these receiver dilemma sets. In the next section we build on this observation to obtain a measure of informativeness of a sender's utility function.

\section{Information revealed via signalling}\label{sec5}

\subsection{Informativeness of a utility function}
Theorem \ref{Theorem:Receiver_dilemma_set} shows that in general our game admits multiple Stackelberg equilibria.  Though the utility of the sender in equilibrium does not depend on the specific equilibrium strategy being considered (by Proposition \ref{proposition:wcu_is_correct_recovery} it is equal to the utility for correct recovery), the number of symbols recovered by the receiver is a function of the S.E. strategy of the sender. Thus the amount of information revealed to the receiver in a game depends not only on the utility function of the sender but also on the specific S.E. under consideration. We now define a measure of the amount of information that is independent on the equilibrium under consideration. 
\begin{definition}(Number of symbols recovered by the receiver in a S.E.)
	The number of symbols recovered by the receiver when the sender plays S.E. strategy $ s^* $, denoted by $\Nsrr$, is defined as $ |\Dscr(g,s^*)| $ where $ g $ is any strategy in $ \brr(s^*). $
\end{definition}
Note that the above definition is meaningful because $ |\Dscr(g,s^*)| $ is the same for each $ g \in \brr(s^*) $, by Theorem~\ref{Theorem:Receiver_dilemma_set}. Moreover, we have 
$\Nsrr = |\Pscr(s^*)|.$ We have seen that all one-to-one mappings are equilibrium strategies for the sender. From Lemma \ref{lemma:bestresponsetomanyone}, if $ s^* \in \AS$ is a one-to-one mapping then $ \brr(s^*)=\{(s^*)\inv\} $ and consequently, we get $\Nsrr=q.$ 	Thus, the maximum of $ \Nsrr $ over all equilibria $ s^* $ is $ q $ for any $ \ut $.

We define the \textit{minimum} of $ \Nsrr $ over all equilibria $ s^* $ as the \textit{informativeness} of a utility function $ \ut. $ 
\begin{definition}(Informativeness of a utility function)
	The informativeness of a utility function $\ut$ of the sender denoted by $\Iscr (\ut)$ is the optimal value of the optimization problem below.
\begin{equation}
\begin{aligned}
\Iscr(\ut):=\min_{s^*\in \AS} \quad &{\Nsrr} \\
\textrm{s.t.} \quad & s^* \text{is a sender S.E. strategy}. 
\end{aligned}
\end{equation}
\end{definition}

$ \Iscr(\ut) $ captures the minimum amount of information revealed to a receiver in such a game. This quantity is a function only of the utility function of the sender and can hence be thought of as a measure of how informative an interaction with such a sender can be for the receiver.

The number of equilibria of this game is enormous; recall that all one-to-one sender strategies constitute equilibria whereby there are at least $ q! $ equilibria in any game. Though it may appear that computing $ \Iscr(\ut) $ would require one to enumerate all equilibria of the game, we show $ \Iscr(\ut) $ can be computed from $ \ut $ alone without explicitly finding all equilibria. Our main result builds on the Theorem~\ref{Theorem:Receiver_dilemma_set} to characterize  $ \Iscr(\ut) $ in terms of a graph. Using this characterization we compute the informativeness for some examples and characterize the existence of pooling and separating equilibria of the game via the informativeness of the utility.

We recollect the following notions from graph theory. Let $G=(\Xscr,E)$ be a graph where $\Xscr$ is the set of vertices and  $E$ is the set of edges. 
A set $\Cscr \subseteq \Xscr$ is called a \textit{clique} if $(x,y)\in E, \forall x,y\in \Cscr.$ We will consider a singleton subset of $\Xscr$ a clique by definition. We call a set $\Iscr \subseteq \Xscr$ an \textit{independent  set}  if  $(x,y)\notin E,\forall x,y\in \Iscr.$ The size of the largest possible independent set of $G$ is called the \textit{independence number} of the graph and is denoted by $\alpha (G).$ A \textit{vertex clique cover} of the graph $G$ is a collection of \textit{disjoint cliques} in the graph $G$ whose union is $ \Xscr $. The \textit{vertex clique cover number} of a graph $G$ denoted by $\theta_v (G)$ is the minimum number of cliques in a vertex clique cover of the graph $G.$ 

\begin{definition}(Strong sender graph) Consider a sender with utility $ \ut $. 
The \textit{strong sender graph} $G_F=(\Xscr, E_F)$ is defined as the graph where $(x, y)\in E_F$
if $\ut (x,x) \leq  \ut (y, x)$ and $\ut (y, y) \leq \ut (x,y)$.
\end{definition}
\begin{definition}(Weak sender graph)\label{def:weak}
Consider a sender with utility $ \ut $. 
The \textit{weak sender graph} $G_L=(\Xscr, E_L)$ is defined as the graph where $(x, y)\in E_L$ if either $\ut_n (x,x) \leq  \ut_n (y, x)$ or $\ut_n (y, y) \leq \ut_n (x,y)$.
\end{definition}

 In the following theorem we show the informativeness of a utility is the \textit{vertex clique cover number} of the strong sender graph induced by the utility. 

\begin{theorem}
\label{Theorem:mimumm_symbols_recovered}
Consider a utility function of the sender $\ut.$
\begin{enumerate}
\item (Correspondence between S.E. and clique covers) If $s$ is a S.E. strategy then $\Pscr (s)$ is a  vertex clique cover of the strong sender graph $G_F$. Conversely, if  $\{P_1,...,P_l\}$ is a vertex clique cover of the strong sender graph $G_F$ then there exists a S.E. strategy $s\in \AS$ such that $\Pscr(s)=\{P_1,...,P_l\}$. 
\item (Informativeness is equal to the vertex clique cover number of the strong sender graph)
 $\Iscr(\ut)=\theta_v(G_F)$, where $ G_F $ is the strong sender graph. 
\end{enumerate}
\end{theorem}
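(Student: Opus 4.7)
The plan is to leverage Theorem~\ref{Theorem:Receiver_dilemma_set}, which already characterizes an S.E. strategy $s^*$ purely in terms of its receiver dilemma set $\Pscr(s^*)$: namely, for every part $P\in\Pscr(s^*)$ and every $x,\hat{x}\in P$, one has $\ut(x,x)\leq\ut(\hat{x},x)$. The theorem then becomes a translation between this local condition on parts and the edge condition defining the strong sender graph $G_F$, after which part~2 is a one-line consequence.

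For the forward direction of part~1, I would take any S.E.\ strategy $s^*$ and a part $P\in\Pscr(s^*)$, and show each pair of distinct $x,y\in P$ forms an edge of $G_F$. Applying the characterization in \eqref{eq:Receiver_dilemma_set} with the pair $(x,\hat{x})=(x,y)$ gives $\ut(x,x)\leq\ut(y,x)$, and applying it with $(x,\hat{x})=(y,x)$ gives $\ut(y,y)\leq\ut(x,y)$; together these are precisely the two inequalities in the definition of $E_F$, so $(x,y)\in E_F$. Hence $P$ is a clique. Since $\Pscr(s^*)$ consists of disjoint sets whose union is $\Xscr$ (by definition of the pre-image partition in \eqref{eq:receiversdilemmaset}), $\Pscr(s^*)$ is a vertex clique cover of $G_F$.

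For the converse, given a clique cover $\{P_1,\dots,P_l\}$ of $G_F$, I would construct an $s\in\AS$ by picking a representative $r_i\in P_i$ (with $r_i\neq r_j$ for $i\neq j$, which is possible since the $P_i$ are disjoint) and setting $s(x)=r_i$ for every $x\in P_i$. Then by construction $\Pscr(s)=\{P_1,\dots,P_l\}$. To verify $s$ is an S.E.\ strategy via Theorem~\ref{Theorem:Receiver_dilemma_set}, I fix a part $P_i$ and any $x,\hat{x}\in P_i$. If $x=\hat{x}$ the inequality $\ut(x,x)\leq\ut(\hat{x},x)$ is trivial; if $x\neq\hat{x}$, then $(x,\hat{x})\in E_F$ because $P_i$ is a clique, which in particular yields $\ut(x,x)\leq\ut(\hat{x},x)$. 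Thus \eqref{eq:Receiver_dilemma_set} holds and $s$ is an S.E.\ strategy.

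Part~2 then follows immediately: by Theorem~\ref{Theorem:Receiver_dilemma_set} the number of symbols recovered at an S.E.\ equals $|\Pscr(s^*)|$, so
\begin{equation*}
\Iscr(\ut)=\min_{s^*\text{ S.E.}}|\Pscr(s^*)|=\min\{\,l:\{P_1,\dots,P_l\}\text{ is a clique cover of }G_F\,\}=\theta_v(G_F),
\end{equation*}
where the middle equality uses the bijection established in part~1 between S.E.\ receiver dilemma sets and vertex clique covers of $G_F$. I expect the only mildly delicate point to be making sure the two-sided edge condition of $G_F$ really is equivalent to the one-sided condition in Theorem~\ref{Theorem:Receiver_dilemma_set} when quantified over all ordered pairs in a part; this is handled by the symmetric use of $(x,y)$ and $(y,x)$ noted above.
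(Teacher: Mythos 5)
Your proposal is correct and follows essentially the same route as the paper: both directions of part 1 are reduced to Theorem~\ref{Theorem:Receiver_dilemma_set} by matching the per-part inequality condition against the two-sided edge condition of $G_F$, and part 2 follows from the resulting bijection between receiver dilemma sets of S.E.\ strategies and vertex clique covers. The only cosmetic difference is that the paper labels each part $P_k$ with an arbitrary distinct symbol $c_k$ rather than a representative $r_k\in P_k$; both choices yield the same receiver dilemma set.
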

\begin{proof} \textit{Proof of 1}:$(\Rightarrow)$: Let $s\in \AS$ be any S.E. strategy of the sender. Fix $ P \in P(s) $ and consider distinct $ x,y \in P $. 
Recalling Theorem \ref{Theorem:Receiver_dilemma_set}, we have 
$ \ut(x,x)\leq \ut(y,x) $ and $ \ut(y,y)\leq \ut(x,y) $, or equivalently, $ (x,y )\in \ef $, by the definition of strong sender graph. Since $ x,y \in P$ were arbitrary, it follows that $ P $ is a clique in $ G_F $, whereby $ \Pscr(s) $ is a clique cover of $ G_F. $

($\Leftarrow$)
Suppose $C=\{P_1,...,P_l\} $ is a vertex clique cover of $G_F$. Let $ c_1,\hdots,c_l $ be distinct symbols in $ \Xscr. $ Define the following strategy $ s^*\in \AS $ as follows
\[ s^*(x) = c_k \quad \forall \ x \in P_k, \ \forall k=1,\hdots,l.\]
Clearly, $ C = \Pscr(s^*). $ It is easy to see that $ s^* $ satisfies \eqref{eq:Receiver_dilemma_set} and hence is a S.E..

\textit{Proof of 2}: From part \textit{1} of Theorem \ref{Theorem:mimumm_symbols_recovered}, we know that a strategy $ s^* \in \AS $ is a S.E. of the sender if and only if $ \Pscr(s^*) $ is a vertex clique cover of the strong sender graph $ G_F $, and  every vertex clique cover $ C $ corresponds to a S.E. strategy whose receiver dilemma set is equal to $ C $. Moreover, $ \Nsrr =|\Pscr(s^*)|$  whereby $ \Iscr(\ut) $ is equal to the vertex clique cover number of $ G_F $.
\end{proof}

 We now study examples that demonstrate how the informativeness serves as a measure of information revelation.
\begin{examp}\label{example1} Let $\Xscr=\{0,1,2\}$  and consider a utility function $\ut_1:\Xscr \times \Xscr\longrightarrow \Rbb$ given by 
\[\ut_1=[\ut_1(i,j)]=\begin{pmatrix}
0 & 1 & 1 \\
1 & 0 & 1 \\
1 & 1 & 0 
\end{pmatrix}\]
 where $\ut_1(i,j)$ is given by the $(i,j)\th$ entry\footnote{Note that the rows and columns are numbered $ 0,\hdots,q-1 $.} in the given matrix where $ i,j \in \Xscr $. This is a case of pure misalignment of interest between the sender and the receiver. The sender here strictly prefers incorrect recovery of every symbol to correct recovery. It is intuitive that since such a sender never prefers the truth, very little information can be recovered from it. For this sender there is an edge in $G_F$ (see Fig~\ref{Figure-2}) between every distinct $x$ and $y$ in  $\Xscr$. Thus, every mapping in $ \AS $ is a S.E. strategy of the sender and consequently, we have $\Iscr(\ut_1)=1,$ thereby confirming our intuition.  
 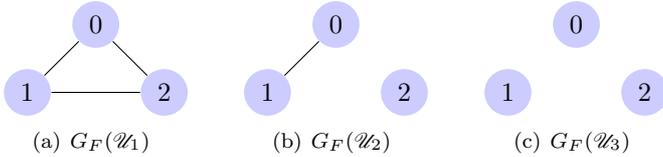
\begin{figure}[H]
\centering
\subfigure[$G_F(\ut_1 )$]{
 \begin{tikzpicture}
 [scale=.9,auto=center,every node/.style={circle,fill=blue!20}] 
   \node [ circle] (0) at ( 0, 1) {0};
   \node [ circle] (1) at (-1, 0) {1};
   \node [ circle] (2) at ( 1, 0) {2};
   \draw (0) -- (1) -- (2) -- (0);

 \end{tikzpicture}}\hfill
\subfigure[$G_F(\ut_2)$]{
 \begin{tikzpicture}
 [scale=.9,auto=center,every node/.style={circle,fill=blue!20}] 
 \node [ circle] (0) at ( 0, 1) {0};
 \node [ circle] (1) at (-1, 0) {1};
  \node [ circle] (2) at ( 1, 0) {2};
  \draw (0) -- (1);
\end{tikzpicture}}\hfill
\subfigure[$G_F(\ut_3)$]{
 \begin{tikzpicture}
 [scale=.9,auto=center,every node/.style={circle,fill=blue!20}] 
 \node [ circle] (0) at ( 0, 1) {0};
 \node [ circle] (1) at (-1, 0) {1};
  \node [ circle] (2) at ( 1, 0) {2};
  
\end{tikzpicture}}
\caption{Strong Sender Graphs}\label{Figure-2}
\end{figure}



Next we look at a slightly different sender whose utility function $\ut_2:\Xscr \times \Xscr\longrightarrow \Rbb$ is given by
\[\ut_2=[\ut_2(i,j)]=\begin{pmatrix}
0 & 1 & -1 \\
1 & 0 & 1 \\
1 & -1 & 0 
\end{pmatrix}.\]
 This sender strictly prefers that $0$ is recovered incorrectly either as $1$ or $2$. But for the symbols $1$ and $2$, it has mixed preferences. For instance it prefers that $ 1 $ is recovered as $ 0 $ as compared to $ 1 $, which in turn is preferred to $ 1 $ being recovered as $ 2 $. A similar predicament holds for the symbol $ 2 $. Here, it is intuitive that the receiver can recover the symbol $ 0 $ and at least one of $ 1 $ and $ 2 $. From the matrix it is evident that there exists only one edge, i.e., $(0,1)$ in $G_F(\ut_2)$. 
 Therefore, $\Iscr(\ut_2)=2$ for the given utility function of the sender. In other words, the receiver can recover at least two symbols out of three in any equilibrium, once again confirming our intuition.  One can verify that $\Pscr=\{\{0\},\{1\},\{2\}\}$ and $\Pscr'=\{\{0,1\},\{2\}\}$ are the only partitions which generate S.E. strategies. 

In the above two variations of the utility function of the sender, we dealt with senders whose interest misaligned with that of the receiver. Although these two cases suggest that misalignment may lead to loss of information, we find that that may not always be the case. Consider another utility function $\ut_3:\Xscr \times \Xscr\longrightarrow \Rbb$ where
\begin{equation}
\ut_3=[\ut_3(i,j)]=\begin{pmatrix}
	0 & 1 & -1 \\
	-1 & 0 & 1 \\
	1 & -1 & 0 
\end{pmatrix}. \label{eq:u3}
\end{equation}
 Observe that although for each symbol in $ \Xscr $ there is another symbol that is preferred by the sender, the exact number of symbols that can be recovered is not evident. Remarkably, we find that the receiver can recover all three symbols in any equilibrium. The utility $ \ut_3 $ is such that $G_F$ has no edges. Hence, $\Iscr(\ut_3)=3$ and indeed it is optimal for the sender to play only  one-to-one mappings. Thus, even though the sender and receiver have misaligned interests, we end up having full information revelation at every Stackelberg equilibrium. 

Why did this happen? Suppose the source symbol is $0$. Observe that although the sender prefers $0$ being observed as $2$ but for that he has to map $0$ and $2$ to the same element to confuse the receiver. After observing the sender's choice of strategy, the receiver's  best response would be a strategy which would map the sender's message to either $0$ or $2$.  Thus, the set of possible utilities of the sender is $\{\ut_3(0,0)=0,\ut_3 (2,0)=1\}$. Similarly if $2$ is the source symbol, the set of possible utilities is $\{\ut_3(0,2)=-1,\ut_3 (2,2)=0\}$. Consequently, the worst case utility the sender can get for these cases are $\ut_3(0,0)=0$ and $\ut_3(0,2)=-1$. On the other hand if the sender played any one-to-one mapping, he would have obtained utilities $\ut_3(0,0)=0$ and $\ut_3(2,2)=0$. Hence, the worst case utility obtained by the sender from a many-to-one mapping is lower than the utility of correct recovery, which is obtainable from a one-to-one mapping. Thus only one-to-one mappings are equilibrium strategies for the sender and thereby full information is revealed. 
\end{examp}


\subsection{Informativeness and pooling and separating equilibria}
Our game may be viewed as a signalling game where the  \textit{type} of the sender is the symbol observed by the sender.  Bayesian signalling games are known to admit three categories of equilibria based on whether the receiver is able to ascertain the type of the sender: these are \textit{separating, pooling} and \textit{semi-separating} equilibria \cite{sobel2009signaling}, \cite{olivella2017reputational}. We extend these notions to our game below. Our main contribution is that we can characterize the existence of these equilibria in terms of the informativeness of the utility function. 
\begin{definition}	Consider a S.E. of the game with the sender's strategy $ s^* \in \AS$. 
	\begin{itemize}
		\item \textit{Separating Equilibrium}: The S.E. is a  \textit{separating equilibrium} if for distinct $ x,x' \in \Xscr $, $ s^*(x)\neq s^*(x'). $
		\item 		\textit{Pooling Equilibrium}: The S.E. is a \textit{pooling equilibrium}  if for all $ x,x' \in \Xscr $, $ s^*(x)=s^*(x'). $
		\item \textit{Semi-separating Equilibrium}: The S.E. is said to be \textit{semi-separating equilibrium} if it is none of the above. 
	\end{itemize}
\end{definition} 

These three equilibrium classes represent qualitatively different outcomes of the game. However, they do not quantify the precise amount of information revealed. On the other hand, the informativeness notion that we have introduced is a sharper notion that serves as such a quantitative measure. Below we show that using the informativeness one can directly ascertain the existence of the equilibria, without explicitly computing all equilibria.

%
\begin{theorem}(Equivalence of informativeness and existence of different kinds of equilibria)\label{Theorem:signalingequilibria_S.E.}
	Let $ \ut $ be any utility function of the sender. 
	\begin{enumerate}
		\item For every utility function $ \ut $ of the sender, the game admits at least one  separating equilibrium.
		\item The game admits only separating equilibria if only if $ \Iscr(\ut)=q. $
		\item The game admits a pooling equilibrium if and only if $ \Iscr(\ut)=1. $
		\item The game admits a semi-separating equilibrium if and only if $ \Iscr(\ut) < q $.
	\end{enumerate}
\end{theorem}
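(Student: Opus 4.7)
The plan is to exploit the bijective correspondence established in Theorem~\ref{Theorem:mimumm_symbols_recovered} between S.E.\ sender strategies (identified via their receiver dilemma sets) and vertex clique covers of the strong sender graph $G_F$. Under this dictionary, a separating S.E.\ corresponds to the partition $\{\{0\},\{1\},\ldots,\{q-1\}\}$ of $\Xscr$ into singletons, a pooling S.E.\ corresponds to the trivial partition $\{\Xscr\}$ (which requires $\Xscr$ to be a single clique), and a semi-separating S.E.\ corresponds to a vertex clique cover $C$ with $1<|C|<q$. With this equivalence, every assertion reduces to a statement about the graph $G_F$.

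For part~1, observe that $\{\{x\}:x\in\Xscr\}$ is trivially a vertex clique cover of $G_F$ (singletons are cliques by definition), so Theorem~\ref{Theorem:mimumm_symbols_recovered} provides a corresponding S.E.\ strategy, which is separating. For part~2, the game admits \emph{only} separating S.E.\ iff every S.E.\ induces a partition of size $q$, iff every vertex clique cover of $G_F$ has size $q$, iff $\theta_v(G_F)=q$; by Theorem~\ref{Theorem:mimumm_symbols_recovered} this is $\Iscr(\ut)=q$. For part~3, a pooling S.E.\ exists iff $\{\Xscr\}$ is a clique cover of $G_F$, iff $\Xscr$ is itself a clique in $G_F$, iff $\theta_v(G_F)=1$, iff $\Iscr(\ut)=1$.

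For part~4, the ($\Rightarrow$) direction is immediate from Theorem~\ref{Theorem:mimumm_symbols_recovered}: a semi-separating S.E.\ strategy $s^*$ has $1<|\Pscr(s^*)|<q$, so $\Iscr(\ut)\leq |\Pscr(s^*)|<q$. For the ($\Leftarrow$) direction, suppose $\Iscr(\ut)<q$. If $1<\Iscr(\ut)<q$, an optimal vertex clique cover of $G_F$ already has size strictly between $1$ and $q$ and hence, by Theorem~\ref{Theorem:mimumm_symbols_recovered}, produces a semi-separating S.E.\ directly. The delicate subcase is $\Iscr(\ut)=1$, in which the only minimum clique cover is $\{\Xscr\}$ and one must manufacture a clique cover of intermediate size. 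The key observation is that cliques are hereditary under subsets: since $\Xscr$ is a clique of $G_F$, so is every subset of $\Xscr$. Thus for $q\geq 3$ the partition $\{\{0\},\{1,2,\ldots,q-1\}\}$ is a vertex clique cover of size $2$, which by Theorem~\ref{Theorem:mimumm_symbols_recovered} realizes a semi-separating S.E.

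The main obstacle—modest but genuine—lies in this last subcase: although the \emph{minimum} clique cover witnessing $\Iscr(\ut)=1$ yields a pooling (not semi-separating) equilibrium, one must exhibit a non-optimal cover of intermediate size. This is exactly where the hereditary property of cliques does the work, turning a single clique $\Xscr$ into an arbitrary refinement that is still a valid vertex clique cover and hence a valid S.E.\ partition.
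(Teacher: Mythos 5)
Your proof is correct and takes essentially the same route as the paper's: every part is read off from the S.E.--clique-cover correspondence of Theorem~\ref{Theorem:mimumm_symbols_recovered}. The only substantive difference is in the converse of part~4, where the paper merely asserts that $\Iscr(\ut)<q$ yields a partition of size strictly between $1$ and $q$, while you explicitly isolate the subcase $\Iscr(\ut)=1$ and resolve it via the hereditary property of cliques --- a worthwhile addition, though note that both arguments tacitly require $q\geq 3$ at that point, since for $q=2$ no partition of $\Xscr$ has size strictly between $1$ and $2$.
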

\begin{proof}
	\textit{Proof of 1)}: Since every one-one mapping is a  S.E. strategy of the sender, hence every utility function of the sender admits at least one separating equilibrium.
	
	\textit{Proof of 2)}: From Theorem \ref{Theorem:mimumm_symbols_recovered}, we have $\Iscr(\ut)=q$ if and only if $\{ \{0\},\{1\},...,\{q-1\}\}$  is the only possible vertex clique cover for the strong sender graph of the given utility function of the sender.  Now this set of cliques $\{\{0\},\{1\},...,\{q-1\}\}$ can form a receiver dilemma set of a strategy $ s^*\in \AS $ if and only if $ s^* $ is a one-to-one mapping (using Theorem \ref{Theorem:Receiver_dilemma_set}). Therefore, the game admits only separating equilibria if and only if $\Iscr(\ut)=q.$ 
	
	\textit{Proof of 3)}: Note that in a pooling equilibrium the receiver's dilemma set $\Pscr(s^*)$ of the strategy of the sender $s^*$ has only one element $\Xscr$ such that $|\Pscr(s^*)|=1$.  Theorem~ \ref{Theorem:Receiver_dilemma_set} states that $s^*$  is a S.E. strategy iff $\Xscr$ is a clique in $ G_F $. Therefore, the game admits a pooling equilibrium if and only if $\Iscr(\ut)=1.$
	
	\textit{Proof of 4)}: From Theorem~\ref{Theorem:Receiver_dilemma_set},  $\Iscr(\ut)<q$ if and only if there exists a partition $\Pscr$ of $\Xscr$ such that $1<|\Pscr|<q$ where every element $P\in\Pscr$ is a clique in $ G_F $. Thus, from Theorem~\ref{Theorem:Receiver_dilemma_set}, we conclude that there exists a semi-separating equilibrium if and only if $ \Iscr(\ut)<q$.
\end{proof}

\subsection{Comparison with the model with the receiver as the leader}
We now compare the information recovered by the receiver in our model with that recovered when the receiver is the leader. Specifically we compare the information extraction capacity $ \Xi(\ut) $ defined in Section~\ref{sec:2prior} with the informativeness $ \Iscr(\ut).$ To motivate this comparison we consider an example below. Recall the \textit{weak sender graph} $G^L =(\Xscr, E_L)$ from Definition~\ref{def:weak}. Note that every edge in $\ef$ is also an edge in $\el$, i.e.,  $\ef \subseteq \el$. 

\begin{examp}
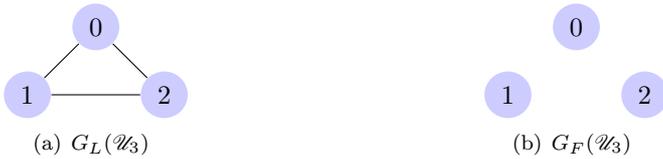
\begin{figure}[H]

\centering
\subfigure[$G_L (\ut_3 )$]{
 \begin{tikzpicture}
 [scale=.9,auto=center,every node/.style={circle,fill=blue!20}] 
   \node [ circle] (0) at ( 0, 1) {0};
    \node [ circle] (1) at (-1, 0) {1};
    \node [ circle] (2) at ( 1, 0) {2};
     \draw (0) -- (1) -- (2) -- (0);

 \end{tikzpicture}}\hfill
\subfigure[$G_F (\ut_3 )$]{
 \begin{tikzpicture}
 [scale=.9,auto=center,every node/.style={circle,fill=blue!20}] 
 \node [ circle] (0) at ( 0, 1) {0};
    \node [ circle] (1) at (-1, 0) {1};
    \node [ circle] (2) at ( 1, 0) {2};
\end{tikzpicture}}
\caption{Sender Graphs}\label{Figure-1}
\end{figure}
Recall $\ut_3 $ in \ref{example1} and recall that there exists no edge in the strong sender graph $G_F=G_F(\ut_3) $. It can be verified on the other hand that the weak sender graph $G_L=G_L(\ut_3)$ is complete (Figure \ref{Figure-1}). Clearly from Theorem  \ref{Theorem:mimumm_symbols_recovered}, it is evident that $\theta_v(G_F)=3$  while $\alpha(G_L)=1$. Therefore from Theorem \ref{Theorem:Anuj}, the minimum information recovered by the receiver for $\ut_3 $ when it is the follower is greater than the maximum information it can recover when it is the leader.

Note that only a choice of one-to-one mapping by both the players would make it possible for the receiver to recover all three symbols correctly. But if the receiver is the leader and it chooses any one-to-one mapping then its easy for the sender to get everything recovered according to its preferences. For $\ut_3$, if the receiver chooses any one-to-one mapping, the sender can easily choose such a strategy which would ensure that the $0$ gets recovered as $2$, $1$ gets recovered as $0$ and $2$ gets recovered as $1$ resulting in $|\Dscr(g,s)|=0.$ Thus, a one-to-one mapping is never optimal for the receiver when it is the leader and a choice of a many-to-one strategy $g\in \AR$ necessarily implies $|\Dscr(g,s)|<3$, for every $s\in \AS$. 
\end{examp}

We show that the above observation is true in general. For any utility function, the informativeness is no less than the  information extraction capacity of the sender.
Thus, for any utility function of the sender, the receiver benefits being a follower in the game as opposed to the setting where he is the leader as in ~\cite{vora2022shannon,vora2020zero}.

\begin{theorem}(Receiver is better off being a follower)
\label{Theorem:Receiver_better_off_as_follower}
For any utility function of the sender $\ut$,\begin{equation}\label{eq:Receiver_better_off_as_follower01}
 \Xi (\ut)\leq \Iscr(\ut).
\end{equation}

\end{theorem}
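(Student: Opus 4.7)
The plan is to reduce the inequality $\Xi(\ut) \leq \Iscr(\ut)$ to two purely graph-theoretic facts and one containment relationship between the weak and strong sender graphs. By Theorem~\ref{Theorem:Anuj} and Theorem~\ref{Theorem:mimumm_symbols_recovered}, it suffices to show that $\alpha(G_L) \leq \theta_v(G_F)$, where $G_L=(\Xscr,E_L)$ and $G_F=(\Xscr,E_F)$ are the weak and strong sender graphs, respectively.

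First I would record the elementary but crucial observation that $E_F \subseteq E_L$: this follows directly from Definition~\ref{def:weak} and the definition of the strong sender graph, since the edge condition for $G_F$ requires both inequalities $\ut(x,x) \leq \ut(y,x)$ and $\ut(y,y) \leq \ut(x,y)$, whereas the edge condition for $G_L$ requires only one of them. Hence $G_F$ is a spanning subgraph of $G_L$.

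The next step is to lift this inclusion to clique-cover numbers. Any clique $C$ in $G_F$ has all of its pairs connected in $G_F$, and by $E_F \subseteq E_L$ those pairs are also connected in $G_L$; thus $C$ is a clique in $G_L$. Consequently every vertex clique cover of $G_F$ is a vertex clique cover of $G_L$, yielding $\theta_v(G_L) \leq \theta_v(G_F)$. Finally I would invoke the standard bound $\alpha(G_L) \leq \theta_v(G_L)$, which holds for any graph because in any partition of the vertex set into cliques, each clique can contain at most one vertex from any independent set, so an independent set has size at most the number of cliques. Chaining these gives
\begin{equation*}
\Xi(\ut) \;=\; \alpha(G_L) \;\leq\; \theta_v(G_L) \;\leq\; \theta_v(G_F) \;=\; \Iscr(\ut),
\end{equation*}
which is the desired inequality.

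There is no real obstacle here; the proof is essentially a one-line graph-theoretic argument once the inclusion $E_F \subseteq E_L$ is noted. The only point that deserves a brief sentence of justification in the write-up is the claim that cliques are preserved under adding edges (so that clique covers of $G_F$ transfer to $G_L$), since this is what links the two distinct graphs appearing in $\Xi(\ut)$ and $\Iscr(\ut)$.
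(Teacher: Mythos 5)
Your proof is correct and follows essentially the same approach as the paper: both arguments rest on the inclusion $E_F \subseteq E_L$ together with the standard bound $\alpha(G) \leq \theta_v(G)$. The only (immaterial) difference is the intermediate quantity: the paper chains $\alpha(G_L) \leq \alpha(G_F) \leq \theta_v(G_F)$, using that independent sets of $G_L$ remain independent in the sparser graph $G_F$, whereas you chain $\alpha(G_L) \leq \theta_v(G_L) \leq \theta_v(G_F)$, using that clique covers of $G_F$ remain clique covers of $G_L$.
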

\begin{proof}
Let $\Iscr_F$ and $\Iscr_L$ be any two independent sets of the largest possible size in $G_F$ and $G_L$ respectively. From the definition of strong sender graphs and weak sender graphs, it is evident that $|\Iscr_L|\leq |\Iscr_F|$ which implies $\alpha(G_L)\leq \alpha(G_F).$ Additionally, it is well known that for any graph $G$, $\alpha(G)\leq \theta_v(G)$. 
 Thus, we can conclude that  \[\alpha(G_L)\leq \alpha(G_F)\leq \theta_v (G_F).\] And using Theorem \ref{Theorem:Anuj}, we conclude \eqref{eq:Receiver_better_off_as_follower01}.
\end{proof}

{\color{red} In the following proposition, we show that the information extraction capacity of the sender and the informativeness of the utility function of the sender are \textit{mathematical programming duals} of each other if the utility function of the sender $\ut$ is \textit{symmetric}, i.e., $\ut(x,y)=\ut(y,x), \forall x,y \in \Xscr.$
\begin{proposition}
If the utility function of the sender $\ut$ is symmetric then  $\Xi(\ut)$ and $\Iscr(\ut)$ are solutions of mathematical programming problems that are duals of each other.
\end{proposition}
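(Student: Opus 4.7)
The plan is to realize $\Xi(\ut)$ and $\Iscr(\ut)$ as the optimal values of a primal--dual pair of integer programs whose LP relaxations are formal LP duals of each other. By the Maximum information theorem (Theorem~\ref{Theorem:Anuj}) and Theorem~\ref{Theorem:mimumm_symbols_recovered}, we have $\Xi(\ut)=\alpha(G_L)$ and $\Iscr(\ut)=\theta_v(G_F)$, so the task reduces to exhibiting an IP for $\alpha(G_L)$ and an IP for $\theta_v(G_F)$ which, under symmetry of $\ut$, are primal--dual to each other.

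First I would write the clique-cover IP for $\Iscr(\ut)=\theta_v(G_F)$,
\begin{equation*}
\min\Bigl\{\sum_{C} y_C\ :\ \sum_{C\ni v} y_C\ge 1\ \forall v\in\Xscr,\ y_C\in\{0,1\}\Bigr\},
\end{equation*}
where $C$ ranges over cliques of $G_F$; its LP relaxation has LP dual
\begin{equation*}
\max\Bigl\{\sum_v x_v\ :\ \sum_{v\in C} x_v\le 1\ \forall C\text{ clique of }G_F,\ x_v\ge 0\Bigr\}.
\end{equation*}
On the other side, I would write the standard independent-set IP for $\Xi(\ut)=\alpha(G_L)$ using edge constraints indexed by $E_L$. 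The aim is to identify this IP with the dual LP above (after imposing integrality), so that weak duality recovers $\Xi(\ut)\le\Iscr(\ut)$ of Theorem~\ref{Theorem:Receiver_better_off_as_follower} as a direct consequence of LP weak duality.

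The essential step is to use symmetry of $\ut$ to align the constraint structures. Symmetry yields the clean characterizations $(x,y)\in E_F \iff \max(\ut(x,x),\ut(y,y))\le \ut(x,y)$ and $(x,y)\in E_L \iff \min(\ut(x,x),\ut(y,y))\le \ut(x,y)$. This min/max duality is what I would exploit to argue that the clique-polytope constraints of $G_F$ and the edge constraints of $G_L$ match up so as to form a primal--dual pair. In particular, every 2-vertex clique of $G_F$ is an edge of $G_L$ by $E_F\subseteq E_L$, providing one direction of the identification.

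The main obstacle is that $E_F\subsetneq E_L$ can occur even under symmetry (for an edge where $\min\le \ut(x,y)<\max$), so the clique inequalities of $G_F$ do not literally coincide with the edge inequalities of $G_L$. The delicate part of the proof will be to show that, thanks to symmetry, the larger-clique constraints of $G_F$ together with its edge constraints tighten the $G_F$-independent-set LP precisely to the $G_L$-independent-set LP (or at least to an equivalent formulation sharing the same optimal value). Once this identification is made, the LP duality between the fractional clique cover of $G_F$ and the fractional independent set of $G_L$ closes the argument, exhibiting $\Xi(\ut)$ and $\Iscr(\ut)$ as solutions of mutually dual mathematical programs.
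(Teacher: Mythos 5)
Your high-level framing---realize $\Iscr(\ut)=\theta_v(G_F)$ as a covering integer program whose LP dual is a clique-constrained independent-set program, and then match that dual with an integer program computing $\Xi(\ut)=\alpha(G_L)$---is the same as the paper's. But the proposal stalls exactly where the content of the proposition lies: the ``delicate part'' of showing that the $G_F$-clique constraints tighten the independent-set program to the edge-constrained program on $G_L$ is deferred rather than proved, and in the generality you have set up it cannot be carried out. Under symmetry alone one can have $\alpha(G_L)<\alpha(G_F)$: take $\Xscr=\{0,1\}$ with $\ut(0,0)=0$, $\ut(1,1)=2$, $\ut(0,1)=\ut(1,0)=1$; then $(0,1)\in E_L$ but $(0,1)\notin E_F$, so $\alpha(G_L)=1$ while $\alpha(G_F)=2$. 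Every integral point feasible for the clique-constrained program on $G_F$ is the indicator of an independent set of $G_F$, so no collection of $G_F$-clique inequalities can pull the integer optimum down to $\alpha(G_L)$. The bridge between the two graphs that your argument needs therefore does not exist in general, and the proof does not close.

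The paper sidesteps the two-graph problem entirely: its proof opens by asserting that symmetry gives $G_F=G_L$, after which there is a single graph $G$, the primal is the independent-set IP on $G$ written with one constraint per clique, and the dual is the clique-cover IP; weak duality then recovers Theorem~\ref{Theorem:Receiver_better_off_as_follower}, as you anticipated. Your own min/max reformulation shows why that opening step is the delicate one: under symmetry, $(x,y)\in E_F$ iff $\max(\ut(x,x),\ut(y,y))\le\ut(x,y)$ while $(x,y)\in E_L$ iff $\min(\ut(x,x),\ut(y,y))\le\ut(x,y)$, and these coincide precisely when the diagonal entries involved are equal (as in all of the paper's examples, where $\ut(x,x)=0$ for every $x$). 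So the repair is not to reconcile two different graphs but to establish $G_F=G_L$---which, as your observation and the example above indicate, needs more than symmetry of the off-diagonal entries. As written, your proposal correctly identifies the obstruction but leaves the proposition unproven.
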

\begin{proof}
Since $\ut$ is symmetric, i.e., $\ut(x,y)=\ut(y,x), \forall x,y \in \Xscr$, we have $G_F=G_L$. Let $G = (X,E)=G_F =G_L$ and let $ \Cscr $ be the set of all cliques in $ G. $ It is easy to see that $ \Xi(\ut) = \alpha(G) $ is given by the following ``primal'' problem.
\begin{equation}\label{eq_opti_primal}
\begin{aligned}
(P):\max_{x} \quad &{\sum \limits_{i \in \Xscr} x_i}\\
\textrm{s.t.} \quad & \sum \limits_{j \in c} x_j \leq 1, \quad \forall c \in \Cscr\\
  & x_j \in \{0,1\}, \quad \forall j \in \Xscr.
\end{aligned}
\end{equation}
This follows by noting that $E\subseteq \Cscr$ whereby $ \alpha(G) \geq \OPT(P)$\footnote{$\OPT(P)$ is the optimal value of the optimization problem $P$.}, whereas if $ S $ is any independent set then $ x \in \{0,1\}^{|\Xscr|}$ where $ x_i = 1 $ if and only $ i\in S $ is feasible for \eqref{eq_opti_primal} whereby $ \alpha(G)=\OPT(P). $
Now the dual of $(P)$ is given by
\begin{equation}\label{eq_opti_dual}
\begin{aligned}
(D):\min_{y} \quad &{\sum \limits_{c\in \Cscr} y_c}\\
\textrm{s.t.} \quad & \sum \limits_{c:i \in c} y_c \geq 1 , \quad \forall i \in \Xscr\\
  & y_c \in \{0,1\}, \quad \forall c \in \Cscr.\\
\end{aligned}
\end{equation}
Thus, if $y^*$ is a solution of \eqref{eq_opti_dual}, then 
 $\sum \limits_{c: i \in c} y_{c } ^* \geq 1$ for all $ i\in \Xscr $, implying that every symbol is covered by at least one clique $ c $ such that $ y_c =1 $. This is exactly the covering requirement, whereby $ \OPT(D)=\theta_v(G)=\Iscr(\ut). $
\end{proof}

In this light, when $ \ut $ is symmetric, Theorem~\ref{Theorem:Receiver_better_off_as_follower} is essentially a statement of weak duality. We find this relation fascinating and unusual -- there is no precedence known to us of a switch in the order commitment in a leader-follower setting leading to a primal to dual transformation. Understanding the implications of the above result more deeply is a task for future work.

Our study so far has restricted players to play only deterministic strategies. The next section is dedicated to investigating the case where the players are free to use behavioral  strategies.
\section{Behavioral strategies} \label{sec:random}

The observations in the previous sections have been made by confining the sender's choice to only deterministic (pure) strategies. This in turn resulted in deterministic strategies for the receiver as well. This section is devoted to studying the \textit{Stackelberg equilibrium} when the players are free to use \textit{behavioral strategies} as defined in Section \ref{sec:intro_behav_strat}. As a conclusion we will prove Theorem~\ref{theorem:existence_not_guranteed}, \ie, we present an example where a S.E. does not exist.

To begin, we make few observations about the set $ \Dscr(\sigma,\pi)$ defined in Section \ref{sec:intro_behav_strat} in the following results. The first is about the  well-posedness of our problem. The receiver always recovers at least one symbol correctly for any strategy of the sender. 
	
	\begin{lemma}\label{lemma_BR-02}
		(Receiver can always recover at least one symbol correctly) For every $\pi \in \AS$, there exists a $\sigma \in \AR$ such that $|\Dscr(\sigma , \pi)|=1.$
	\end{lemma}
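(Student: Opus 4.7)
The plan is to exhibit an explicit $\sigma \in \AR$ that always recovers exactly one symbol correctly, regardless of what $\pi$ the sender plays. The key observation is that the receiver has a ``trivial'' constant strategy available: if it always declares the same symbol $x^* \in \Xscr$ irrespective of the received signal, then $x^*$ is trivially recovered correctly with probability one, while every other symbol is recovered with probability zero.

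Concretely, I would fix an arbitrary $x^* \in \Xscr$ and define
\[
  \sigma(\xhat \mid y) :=
  \begin{cases}
    1 & \text{if } \xhat = x^*,\\
    0 & \text{otherwise},
  \end{cases}
  \qquad \forall\, y \in \Xscr.
\]
The first step is to verify that $\sigma \in \AR$, i.e., for every $y$, $\sigma(\cdot \mid y)$ is a valid probability distribution on $\Xscr$; this is immediate since the mass $1$ is placed on $x^*$ and $0$ elsewhere.

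The main calculation is then to evaluate $\Pbb(\Xhat = x \mid X = x)$ under the joint law $\Pbb(x,y,\xhat) = p(x)\pi(y|x)\sigma(\xhat|y)$ for each $x \in \Xscr$. For $x = x^*$, one gets
\[
  \Pbb(\Xhat = x^* \mid X = x^*) = \sum_{y \in \Xscr} \pi(y|x^*)\,\sigma(x^* | y) = \sum_{y \in \Xscr} \pi(y|x^*) = 1,
\]
so $x^* \in \Dscr(\sigma, \pi)$. For any $x \neq x^*$, since $\sigma(x|y) = 0$ for every $y$, we have $\Pbb(\Xhat = x \mid X = x) = 0 \neq 1$, so $x \notin \Dscr(\sigma, \pi)$. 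Combining these two cases yields $\Dscr(\sigma, \pi) = \{x^*\}$ and hence $|\Dscr(\sigma, \pi)| = 1$, as required.

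There is really no obstacle here; the lemma is essentially a sanity check establishing a lower bound on what the receiver can guarantee in the behavioral setting, and the constant-declaration strategy suffices. The slight subtlety, if any, is just to note that the construction of $\sigma$ does not depend on $\pi$ at all, which makes the ``for every $\pi$'' quantifier trivial to handle.
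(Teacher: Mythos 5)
Your proof is correct and follows essentially the same route as the paper: both exhibit the constant-declaration strategy $\sigma(x^*\mid y)=1$ for all $y$ and compute $\Pbb(\Xhat=x^*\mid X=x^*)=\sum_y \pi(y\mid x^*)=1$. If anything, your version is marginally more complete, since you also verify that no symbol other than $x^*$ lies in $\Dscr(\sigma,\pi)$, which is needed to get the exact equality $|\Dscr(\sigma,\pi)|=1$ rather than just a lower bound; the paper leaves that step implicit.
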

	\begin{proof}
Fix a $x\in \Xscr$ and let $ \pi \in \AS $ be arbitrary. Take $\sigma \in \AR$ such that $\sigma (x|y)=1, \forall y \in \Xscr.$ Thus, $x\in \Dscr(\sigma, \pi)$ since
		\begin{equation}
		\Pbb (\widehat{X}=x|X=x)  =\sum \limits_{y \in \Xscr} \pi(y|x)\sigma(x|y) =1.
		\end{equation}
%
		
	\end{proof}
	Thus, given any strategy of the sender, the receiver can always recover at least one element correctly with probability one. 
	
	\begin{lemma} \label{lemma:sigma=1}
		If $x\in \Dscr(\sigma,\pi)$ then $\sigma(x|y)=1, \forall y \in E_x.  $
	\end{lemma}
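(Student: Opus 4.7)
The plan is to unfold the definition of $\Dscr(\sigma,\pi)$ and exploit the fact that a convex combination of numbers in $[0,1]$ equals $1$ only if each number being combined (with positive weight) equals $1$.

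First I would write out the hypothesis $x\in \Dscr(\sigma,\pi)$ explicitly. By definition this says $\Pbb(\widehat{X}=x\mid X=x)=1$, and using the joint distribution $\Pbb(x,y,\xhat)=p(x)\pi(y|x)\sigma(\xhat|y)$ together with $p(x)>0$, this is equivalent to
\begin{equation*}
\sum_{y\in \Xscr}\pi(y|x)\sigma(x|y)=1.
\end{equation*}
Since $\pi(y|x)=0$ for every $y\notin E_x$, the sum collapses to $\sum_{y\in E_x}\pi(y|x)\sigma(x|y)=1$.

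Next I would use the fact that $E_x$ is the support of $\pi(\cdot|x)$, so $\sum_{y\in E_x}\pi(y|x)=1$ and $\pi(y|x)>0$ for every $y\in E_x$. Combining the two displayed facts gives
\begin{equation*}
\sum_{y\in E_x}\pi(y|x)\bigl(1-\sigma(x|y)\bigr)=0.
\end{equation*}
Each term in this sum is nonnegative because $\pi(y|x)>0$ and $0\le \sigma(x|y)\le 1$. Hence every term must vanish, and since $\pi(y|x)>0$ for every $y\in E_x$, we must have $\sigma(x|y)=1$ for all $y\in E_x$, which is exactly what we wanted.

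There is no significant obstacle here; the argument is essentially the standard observation that a convex combination of numbers in $[0,1]$ attains the value $1$ only when each number receiving positive weight equals $1$. The only care needed is in restricting the sum to the support $E_x$ so that the strict positivity of the weights can be invoked to conclude pointwise equality rather than merely almost-everywhere equality.
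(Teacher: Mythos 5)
Your proof is correct and is essentially the same argument as the paper's: the paper phrases it as a contradiction (if some $y'$ with $\pi(y'|x)>0$ had $\sigma(x|y')<1$, the sum $\sum_{y}\pi(y|x)\sigma(x|y)$ would be strictly less than $1$), while you run the identical convex-combination observation in the forward direction. No gap; nothing further needed.
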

	\begin{proof}
		Assume that there exists a $y'\in \Yscr(\pi)$ such that $0\leq \sigma(x|y')<1,$ where $\pi(y'|x)>0.$  Then, $\Pbb (\hat{X}=x|X=x)= \sum \limits_{y \in \Yscr(\pi)} \pi(y|x) \sigma(x|y)<1.$ Consequently, $x \notin \Dscr(\sigma, \pi)$ which is a contradiction. Therefore, if $x\in \Dscr(\sigma,\pi)$ then $\sigma(x|y)=1, \forall y\in E_x$.
	\end{proof}

\begin{proposition}\label{prop:E_x}
(Condition on correct recovery of two distinct symbols) For  $\sigma \in \AR$ and $\pi \in \AS$, if distinct $x,x'\in \Dscr(\sigma,\pi)$ then $E_x \cap E_{x'} =\emptyset.$
\end{proposition}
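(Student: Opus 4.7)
The plan is to argue by contradiction, leveraging the previously established Lemma~\ref{lemma:sigma=1}, which says that every symbol $x$ recovered with probability one must satisfy $\sigma(x|y) = 1$ at every $y$ in the support $E_x$ of $\pi(\cdot|x)$.

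First, I would suppose that there exist distinct $x, x' \in \Dscr(\sigma,\pi)$ and some signal $y^* \in E_x \cap E_{x'}$. Applying Lemma~\ref{lemma:sigma=1} to $x$, since $y^* \in E_x$, we obtain $\sigma(x|y^*) = 1$. Applying the same lemma to $x'$, since $y^* \in E_{x'}$, we obtain $\sigma(x'|y^*) = 1$.

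Next, I would use the fact that $\sigma(\cdot|y^*)$ is a conditional probability distribution over $\Xscr$, so $\sum_{z \in \Xscr} \sigma(z|y^*) = 1$. Since $x \neq x'$, this forces
\[
\sigma(x|y^*) + \sigma(x'|y^*) \leq \sum_{z \in \Xscr}\sigma(z|y^*) = 1,
\]
which contradicts $\sigma(x|y^*) + \sigma(x'|y^*) = 2$. Hence no such $y^*$ exists, \ie, $E_x \cap E_{x'} = \emptyset$.

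The proof is short and mechanical; the only obstacle is recognizing that Lemma~\ref{lemma:sigma=1} already does essentially all of the work, so the remainder is just invoking the normalization of $\sigma(\cdot|y^*)$ to derive the contradiction.
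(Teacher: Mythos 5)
Your proof is correct and follows essentially the same route as the paper: both invoke the fact that a correctly recovered symbol forces $\sigma(\cdot|y)=1$ on its support and derive a contradiction from two distinct symbols claiming probability one at a shared signal (you merely make the normalization step explicit, which the paper leaves implicit). One small point in your favor: you cite Lemma~\ref{lemma:sigma=1}, which is the lemma actually needed, whereas the paper's text mistakenly references Lemma~\ref{lemma_BR-02} at this step.
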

\begin{proof}
Assume $E_x \cap E_{x'} \neq \emptyset$ and let $y \in E_x \cap E_{x'}$. Note that  $x$ and $x' \in \Dscr(\sigma, \pi)$. Therefore from Lemma \ref{lemma_BR-02}, we have $\sigma(x|y)=1$ and $\sigma(x'|y)=1$ which is a contradiction since $x\neq x'$. Thus, $E_x \cap E_{x'} =\emptyset$ if $x,x' \in \Dscr(\sigma,\pi)$ where $ x\neq x' $.
\end{proof}
Therefore, if two distinct symbols belong to the set $\Dscr(\sigma, \pi)$, then they must produce distinct signals under the strategy $\pi$ of the sender. 

We now come to our main observation in this section -- that an equilibrium need not exist for this game. We show this through the following example.
	\begin{examp}\label{example_behavioral}
		Recall  $\ut_3 $ in Example \ref{example1}: \\
		\[[\ut]:=[\ut_3]=\begin{pmatrix} 
0 &1 & -1\\
-1 & 0 & 1\\
1 & -1 & 0
\end{pmatrix},\] where $\Xscr=\{0,1,2\}$. $y_1,y_2,y_3$ are distinct elements of $\Yscr(\pi)$  and $i,j,k $ are distinct symbols of $\Xscr$. $X$ is uniformly distributed on $\Xscr$. Let the prior $ p $ be uniform on $ \Xscr. $

Note that for the given  $\ut$, $\ut(x,y)=-\ut(y,x), \forall x,y \in \Xscr=\{0,1,2\}$. Additionally, if $x,y,z \in \Xscr$ are distinct symbols in $\Xscr$ then $\ut(x,y)=-\ut(x,z)$ and $\ut(x,y)=-\ut(z,y)$. 

We categorize the $\pi$'s in $\AS$ into three different classes on the basis of size of $\Dscr(\sigma, \pi),$ for $\sigma \in \brr(\pi)$. We use the notation `$ \pi  \in \bullet$' to denote that $ \pi $ lies in class `$ \bullet $'. Associated figures depict $ \pi $; for example in Figure \ref{Figure-CAI}, an edge from $ i $ to say $ y_1 $ says that $ y_1\in E_i $. The three classes are as follows:\\
\\
\textbf{(A)} \quad $|\Dscr(\sigma, \pi)|=3, \forall \sigma \in \brr(\pi)$:\\
This is possible when the sender plays any strategy $\pi \in \AS$ where distinct signals are used for distinct symbols, \ie, $\Yscr(\pi)=\{y_1,y_2,y_3\}=\Xscr, E_i=\{y_1\},E_j=\{y_2\},$ and $E_k=\{y_3\}$ where $i,j,k$ are distinct elements in $\Xscr$ and $y_1,y_2,y_3$ are distinct elements in $\Yscr(\pi)$.
Thus, $\brr(\pi)=\{\sigma\}$ such that $\Dscr(\sigma,\pi)=\Xscr.$ Hence, $\bar{U}(\pi,\sigma)=\sum \limits_{x\in \Xscr} p(x)\ut(x,x)= \frac{1}{3}[0]=0$ and  consequently, 
\begin{equation}\label{eq:A}
	\min \limits_{\sigma \in \brr(\pi)} \bar{U}(\pi, \sigma)=0, \quad \forall \pi\in {\rm A}.
\end{equation}
\\
\textbf{(B)} \quad $|\Dscr(\sigma, \pi)|=1,  \forall \sigma \in \brr(\pi)$:\\
Fix $ k\in \Xscr $ and let $\sigma' (k|y)=1,\forall y \in \Yscr(\pi)$. For every $\pi$ in class (B), $\sigma' \in \brr(\pi)$ since $\Dscr(\sigma' , \pi)=\{k\}$. Thus for every $\pi$ in class (B),
  \begin{equation}
  \begin{gathered}
  \bar{U}(\pi, \sigma')
  =\frac{1}{3}[\ut(k,i)+ \ut(k,j)]=0.
  \end{gathered}
  \end{equation}Consequently, \begin{equation}\label{eq:B}
  \min \limits_{\sigma \in \brr(\pi)} \bar{U}(\pi, \sigma)= 0,\quad \forall \pi \in {\rm B.}
\end{equation}
 
\textbf{(C)} \quad $|\Dscr(\sigma, \pi)|=2,  \forall \sigma \in \brr(\pi)$:\\
This class can be subdivided into two subclasses based on the the elements in $\Dscr(\sigma, \pi)$: either $ \Dscr(\sigma,\pi) $ is the same for all $ \sigma \in \brr(\pi) $ or $ \Dscr(\sigma,\pi) $ is one of two distinct sets, $ \{i,j\} $ and $ \{i,k\} $ where $ i,j,k \in \Xscr $, for each $ \sigma \in \brr(\pi) $. Note that a third case where $ \Dscr(\sigma,\pi) $ can take be one of \textit{three} sets $ \{i,j\} $, $ \{j,k\} $ and $ \{i,k\} $ is not possible due to Proposition~\ref{prop:E_x}.\\
\\
\textbf{C.(a)}: For distinct $i,j,k\in \Xscr$ there exist  $\sigma_{ij}, \sigma_{ik} \in \brr(\pi)$ such that $\Dscr(\sigma_{ij}, \pi)=\{i,j\}$ and $\Dscr(\sigma_{ik}, \pi)=\{i,k\}$:\\
\\
We have the following three cases in this class: \\
\\
\indent \textbf{i)} $\Yscr(\pi)=\{y_1 ,y_2\}$ for some distinct $y_1, y_2 \in \Xscr$: Under this case we have $E_i=\{y_1\}, E_j =\{y_2\}$ and $E_k=\{y_2\}$ (see Figure \ref{Figure-CAI}). Thus, for every $\pi$ in this class, $\brr(\pi)=\{\sigma \in \AR| \Dscr(\sigma,\pi)=\{i,j\} \text{or} \{i,k\}\}$. Let $\sigma_{ij},\sigma_{ik}\in \brr(\pi)$ be such that $\Dscr(\sigma_{ij}, \pi)=\{i,j\}$ and $\Dscr(\sigma_{ik}, \pi)=\{i,k\}$. Accordingly, $ \bar{U}(\pi, \sigma_{ij})=\frac{1}{3} [\pi(y_2|k)\ut(j,k)]=\frac{1}{3}\ut(j,k)$ and  $ \bar{U}(\pi, \sigma_{ik})=\frac{1}{3}\ut(k,j)$. Since $\ut(j,k)=-\ut(k,j)$ and  $|\ut(j,k)|=1$, therefore we can conclude that \begin{equation}\label{eq:cai}
	\min \limits_{\sigma \in \brr(\pi)} \bar{U}(\pi, \sigma)=-\frac{1}{3}, \forall \pi \in {\rm C.(a){\rm -}i)}.
\end{equation}

\begin{figure}[H]
\centering
\begin{tikzpicture}
  [scale=.9,auto=center,every node/.style={circle,fill=blue!20}] 
    \node (a1) at (1,1) {$k$};  
  \node (a2) at (1,0)  {$j$}; 
  \node (a3) at (1,-1)  {$i$};  
   \node (a4) at (3,0) {$y_2$};
    \node (a5) at (3,-1) {$y_1$}; 
  
  \draw (a1) -- (a4); 
  \draw (a2) -- (a4); 
  \draw (a3) -- (a5);

 \end{tikzpicture}
\caption{ C.(a) - i)}
\label{Figure-CAI}
\end{figure}
\textbf{ii)} $\Yscr(\pi)=\{y_1 ,y_2, y_3\}$ and $E_i=\{y_1,y_2\}$ for some distinct $y_1, y_2, y_3 \in \Xscr$: In this particular case, $E_j =E_k =\{y_3\}$ are the only possible values of $E_j$ and $E_k$ respectively, thanks to Proposition~\ref{prop:E_x} (see Figure \ref{Figure-004}). Therefore, for every $\pi$ in this class, $\brr(\pi)=\{\sigma \in \AR| \Dscr(\sigma,\pi)=\{i,j\} \text{or} \{i,k\}\}$ and accordingly we get  $\min \limits_{\sigma \in \brr(\pi)} \bar{U}(\pi, \sigma)=-\frac{1}{3}, \forall \pi$ in class C.(a)-ii).
\begin{figure}[H]
\centering
\begin{tikzpicture}
  [scale=.9,auto=center,every node/.style={circle,fill=blue!20}] 
   
  \node (a1) at (1,1) {$k$};  
  \node (a2) at (1,0)  {$j$}; 
  \node (a3) at (1,-1)  {$i$};  
   \node (a4) at (3,1) {$y_3$};
    \node (a5) at (3,0) {$y_2$};
    \node (a6) at (3,-1) {$y_1$}; 
  
  \draw (a1) -- (a4); 
  \draw (a2) -- (a4);  
  \draw (a3) -- (a5); 
  \draw (a3) -- (a6);

 \end{tikzpicture}
\caption{ C.(a) - ii)}
\label{Figure-004}
\end{figure} 
\textbf{iii)} $\Yscr(\pi)=\{y_1 ,y_2, y_3\}$ for some distinct $y_1, y_2, y_3 \in \Xscr$ and $E_i=\{y_1\}$:  
This case has two sub-cases:\\
\\
\indent \textbf{1)} $E_j=\{y_2\}$ and $E_k =\{y_2 , y_3\}$ (Figure \ref{Figure-00} (a)): Note that for every $\pi$ in this class, $\brr(\pi)=\{\sigma \in \AR| \Dscr(\sigma,\pi)=\{i,j\} \text{or} \{i,k\}\}$. Let $\sigma_{ij},\sigma_{ik} \in 
\AR$ such that $\sigma_{ij}(i|y_1)=\sigma_{ij}(j|y_2)=\sigma_{ij}(j|y_3)=1$ and $\sigma_{ik}(i|y_1)=\sigma_{ik}(k|y_2)=\sigma_{ik}(k|y_3)=1$. Thus, $\Dscr(\sigma_{ij},\pi)=\{i,j\}$, $\Dscr(\sigma_{ik},\pi)=\{i,k\}$ and accordingly, $\sigma_{ij},\sigma_{ik} \in \brr(\pi)$. Hence, $\bar{U}(\pi, \sigma_{ij})=\frac{1}{3} [\ut(j,k)]$ and $\bar{U}(\pi, \sigma_{ik})=\frac{1}{3} [\ut(k,j)]=-\bar{U}(\pi, \sigma_{ij})$, resulting in $\min \limits_{\sigma \in \brr(\pi)} \bar{U}(\pi, \sigma)= -\frac{1}{3}, \forall \pi$ in class C.(a)-iii)-1). \\
\indent \textbf{2)} $E_j=\{y_2,y_3\}$ and $E_k =\{y_2 , y_3\}$ (Figure \ref{Figure-00}(b)): For every $\pi$ in this class, $\brr(\pi)=\{\sigma_{ij},\sigma_{ik}\},$ where $\Dscr(\sigma_{ij},\pi)=\{i,j\}$ and $\Dscr(\sigma_{ik},\pi)=\{i,k\}$. Thus,$\min \limits_{\sigma \in \brr(\pi)} \bar{U}(\pi, \sigma)=-\frac{1}{3}, \forall \pi$ in class C.(a)-iii)-2).

Therefore, from the study of the three cases in C.(a), we conclude that \begin{equation}\label{eq:Ca}
	\min \limits_{\sigma \in \brr(\pi)} \bar{U}(\pi, \sigma)=-\frac{1}{3} <0, \forall \pi \in {\rm C.(a).}
\end{equation}
\begin{figure}[H]\label{pic-8}
\makeatletter 
\renewcommand{\thefigure}{\@arabic\c@figure}
\makeatother
\centering

\subfigure[C.(a)-iii)-1)]{
 \begin{tikzpicture}
 [scale=.9,auto=center,every node/.style={circle,fill=blue!20}] 
   \node (a1) at (1,1) {$k$};  
  \node (a2) at (1,0)  {$j$}; 
  \node (a3) at (1,-1)  {$i$};  
    \node (a6) at (3,1) {$y_2$};
    \node (a5) at (3,0) {$y_3$};
    \node (a4) at (3,-1) {$y_1$}; 
  
  \draw (a3) -- (a4); 
  \draw (a2) -- (a6);  
  
  \draw (a1) -- (a5); 
  \draw (a1) -- (a6);  
\end{tikzpicture}}
\subfigure[C.(a)-iii)-2)]{
 \begin{tikzpicture}
  [scale=.9,auto=center,every node/.style={circle,fill=blue!20}] 
  \node (a1) at (1,1) {$k$};  
  \node (a2) at (1,0)  {$j$}; 
  \node (a3) at (1,-1)  {$i$};  
  \node (a4) at (3,1) {$y_3$};
    \node (a5) at (3,0) {$y_2$};
    \node (a6) at (3,-1) {$y_1$}; 
  
  \draw (a1) -- (a4); 
  \draw (a1) -- (a5);  
  \draw (a2) -- (a5);
   \draw (a2) -- (a4); 
  \draw (a3) -- (a6);
  
\end{tikzpicture}}
\caption{C.(a)-iii)} \label{Figure-00}
\end{figure}

\textbf{C.(b)}: There exists distinct $i,j\in\Xscr$ such that $\Dscr(\sigma, \pi)=\{i,j\},  \forall \sigma \in \brr(\pi)$:\\
\\
We have the following two sub-classes in this class:\\
\\
\textbf{i)}\quad  $\Yscr(\pi)=\{y_1 , y_2\}$ for some distinct $y_1 , y_2 \in \Xscr$: For this particular class, we have $E_i=\{y_1\}, E_j =\{y_2\}$ (using Proposition \ref{prop:E_x}). Since $E_k$ can never be empty, thus $E_k=\{y_1\}$ or $\{y_2\}$ or $\{y_1,y_2\}.$ Say $E_k=\{y_1\}$ then there exists a $\sigma' \in \AR$ such that $\Dscr(\sigma', \pi)=\{k,j\}$. This makes $\sigma' \in \brr(\pi)$, which violates the property of this class. Similarly, we conclude that $E_k \neq \{y_2\}$ which makes $E_k =\{y_1 , y_2\}$ (Figure \ref{Figure-002}). Therefore, $\brr(\pi)=\{\sigma_{ij}\in \AR|\Dscr(\sigma_{ij}, \pi)=\{i,j\}\}$. Thus, 
\begin{align}\label{eq:c_a_i}
\hspace{-.5cm}\min \limits_{\sigma \in \brr(\pi)} \bar{U}(\pi,\sigma)&=\frac{1}{3} [\ut(i,i)+\pi(y_1|k)\ut(i,k)\non \\& + \pi(y_2|k)\ut(j,k) + \ut(j,j)]\non \\ =\frac{1}{3}&[\pi(y_1|k)-\pi(y_2|k)]\ut(i,k) \leq \frac{1}{3},
\end{align}
where we have used that $\ut(j,k)=-\ut(i,k). $ Using $i=0,j=2,k=1$, we can conclude that
%
\begin{equation}\label{eq:supp-3}
\sup \limits_{\pi\in {\rm C.(b)-i})} \min \limits_{\sigma \in \brr(\pi)}\bar{U}(\pi, \sigma) = \frac{1}{3}.
\end{equation}
\begin{figure}[H]
\centering
\begin{tikzpicture}
  [scale=.9,auto=center,every node/.style={circle,fill=blue!20}] 
  \node (a1) at (1,1) {$k$};  
  \node (a2) at (1,0)  {$j$}; 
  \node (a3) at (1,-1)  {$i$};  
   \node (a4) at (3,0) {$y_2$};
    \node (a5) at (3,-1) {$y_1$}; 
  
  \draw (a1) -- (a5); 
  \draw (a1) -- (a4); 
  \draw (a2) -- (a4);  
  \draw (a3) -- (a5); 
  
 \end{tikzpicture}
\caption{ C.(b) - i)}
\label{Figure-002}
\end{figure}
\noindent \textbf{ii)}\quad $\Yscr(\pi)=\{y_1 , y_2, y_3\}$ for some distinct $y_1 , y_2, y_3 \in \Xscr$: This class has two-sub cases:\\
\\
\textbf{1)}\quad  $E_i=\{y_1\}, E_j =\{y_2\}$: The only way $i$ and $j$ are recovered correctly with probability one in every best response strategy is if $E_k=\{y_1, y_2, y_3\}$ (see Figure \ref{Figure-001}). Thus, for any $ \sigma \in \brr(\pi), $
\begin{align}
\bar{U}(\pi, \sigma)&=\frac{1}{3}[-\pi(y_1|k) +\pi(y_2|k)+\pi(y_3|k)(-\sigma (i|y_3) \non \\& +\sigma (j|y_3))]\ut(j,k)\leq \frac{1}{3}
\end{align}

If $i=2,j=0,k=1$, then
\begin{equation}\label{eq:cbii1}
	\min \limits_{\sigma \in \brr(\pi)}\bar{U}(\pi, \sigma)=\frac{1}{3}[-\pi(y_1|k) +\pi(y_2|k) - \pi(y_3|k)]. \non 
\end{equation} Therefore,
\begin{equation}\label{eq:supp-2}
\sup \limits_{\pi\in {\rm C.(b)-ii)-1)}}\min \limits_{\sigma \in \brr(\pi)} \bar{U}(\pi, \sigma)= \frac{1}{3}.
\end{equation}
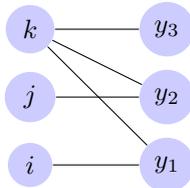
\begin{figure}[H]
\centering
\begin{tikzpicture}
  [scale=.9,auto=center,every node/.style={circle,fill=blue!20}] 
   \node (a1) at (1,1) {$k$};  
  \node (a2) at (1,0)  {$j$}; 
  \node (a3) at (1,-1)  {$i$};  
   \node (a4) at (3,1) {$y_3$};
    \node (a5) at (3,0) {$y_2$};
    \node (a6) at (3,-1) {$y_1$}; 
  
  \draw (a3) -- (a6); 
  \draw (a2) -- (a5);  
  \draw (a1) -- (a4); 
  \draw (a1) -- (a5); 
  \draw (a1) -- (a6);  
  
 \end{tikzpicture}
\caption{ C.(b) - ii) - 1)}
\label{Figure-001}
\end{figure}
\noindent\textbf{2)} \quad $E_i=\{y_1\}, E_j =\{y_2,y_3\}$:
For every $\pi$ in this subclass, we have $\brr(\pi)=\{\sigma_{ij}\}$ where $ \sigma_{ij}(i|y_1)=1,\sigma_{ij}(j|y_2)=1,\sigma_{ij}(j|y_3)=1 $. 
Accordingly, we have 

\noindent\textbf{($\alpha$)} If $E_k =\{y_1, y_2\}$ (see Figure \ref{Figure-0} (a)): 
\begin{equation}\label{alpha_eq}
\begin{gathered}
\min \limits_{\sigma \in \brr(\pi)} \bar{U}(\pi, \sigma) =\frac{1}{3}[-\pi(y_1 |k)+ \pi(y_2 |k)]\ut(j,k).\\
\end{gathered}
\end{equation}
and 

\noindent \textbf{($\beta$)} If $E_k =\{y_1, y_2, y_3 \}$ (see Figure \ref{Figure-0} (b)): 
\begin{equation}\label{beta_eq}
\min \limits_{\sigma \in \brr(\pi)} \bar{U}(\pi, \sigma)= \frac{1}{3}[-\pi(y_1 |k)+ \pi(y_2 |k) + \pi(y_3 |k)]\ut(j,k).
\end{equation}
Using $i=2,j=0,k=1$ in \eqref{alpha_eq} and \eqref{beta_eq}, we conclude that
\begin{equation}\label{eq:supp-1}
\sup \limits_{\pi\in {\rm C.(b)-ii)-2)}} \min \limits_{\sigma \in \brr(\pi)} \bar{U}(\pi,\sigma) = \frac{1}{3}.
\end{equation}

\begin{figure}[H]

\centering

\subfigure[C.(b)-ii)-2-$\alpha$]{
 \begin{tikzpicture}
 [scale=.9,auto=center,every node/.style={circle,fill=blue!20}] 
   \node (a1) at (1,1) {$k$};  
  \node (a2) at (1,0)  {$j$}; 
  \node (a3) at (1,-1)  {$i$};  
    \node (a4) at (3,1) {$y_1$};
    \node (a5) at (3,0) {$y_2$};
    \node (a6) at (3,-1) {$y_3$}; 
  
  \draw (a1) -- (a4); 
  \draw (a1) -- (a5);  
  \draw (a3) -- (a4); 
  \draw (a2) -- (a5); 
  \draw (a2) -- (a6);  
\end{tikzpicture}}
\subfigure[C.(b)-ii)-2-$\beta$]{
 \begin{tikzpicture}
  [scale=.9,auto=center,every node/.style={circle,fill=blue!20}] 
  \node (a1) at (1,1) {$k$};  
  \node (a2) at (1,0)  {$j$}; 
  \node (a3) at (1,-1)  {$i$};  
  \node (a4) at (3,1) {$y_3$};
    \node (a5) at (3,0) {$y_2$};
    \node (a6) at (3,-1) {$y_1$}; 
  
  \draw (a3) -- (a6); 
  \draw (a2) -- (a4);  
  \draw (a2) -- (a5);
   \draw (a1) -- (a4); 
  \draw (a1) -- (a5); 
  \draw (a1) -- (a6);
  
\end{tikzpicture}}
\caption{C.(b)-ii)-2} \label{Figure-0}
\end{figure}
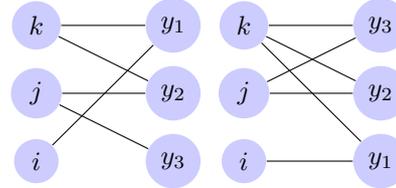

We have seen in our previous observations in \eqref{eq:A},\eqref{eq:B} and \eqref{eq:Ca} that $\min \limits_{\sigma \in \brr(\pi)} \bar{U}(\pi, \sigma)\leq 0, \forall \pi\in \AS \backslash {\rm C.(b)}$. Therefore  using \eqref{eq:supp-3}, \eqref{eq:supp-2} and \eqref{eq:supp-1} we conclude that 
\begin{equation}\label{eq:final_sup}
\sup \limits_{\pi\in \AS} \min \limits_{\sigma \in \brr(\pi)} \bar{U}(\pi, \sigma)=\frac{1}{3}
\end{equation}
and a S.E. strategy $\pi^*\in \AS$ exists if and only if $\min \limits_{\sigma \in \brr(\pi^*)} \bar{U}(\pi^*, \sigma)=\frac{1}{3}$. If $ \pi^* $ exists, then it must lie in one of the sub-classes of C.(b). Say the supremum value $\frac{1}{3}$ is attained in class C.(b)-i). Then using \eqref{eq:c_a_i}, we conclude that 
\begin{equation}\label{eq:contra1}
\begin{split}
 \min \limits_{\sigma \in \brr(\pi)} \bar{U}(\pi^*, \sigma) &= \frac{1}{3}[\pi^*(y_1|k) -\pi^*(y_2|k)]\ut(i,k) =\frac{1}{3},\\
\end{split}
\end{equation}
where $ \{i,j\} =\Dscr(\sigma,\pi^*) $ for all $ \sigma \in \brr(\pi^*) $ and $ k \neq i,j. $
Therefore, if $\ut(i,k)=1$, then we must have $\pi^*(y_2|k)=0,\pi^*(y_1|k)=1$  and if $\ut(i,k)=-1$ then we must have $\pi^*(y_2|k)=1,\pi^*(y_1|k)=0$ . But note that if $\pi^*(y_2|k)=0$ or $\pi^*(y_1|k)=0$ then $\pi^*$ belongs to class C.(a)-i), whereby \eqref{eq:contra1} contradicts \eqref{eq:cai}.  As a consequence, the supremum in \eqref{eq:final_sup} is never attained in C.(b)-i). 

Next assume that the supremum is attained in class C.(b)-ii)-1). Then there exists a S.E. strategy of the sender, $\pi^*$ in class C.(b)-ii)-1), such that
\begin{equation} \label{eq:contra2}
\begin{gathered}
\min \limits_{\sigma \in \brr(\pi^*)} \frac{1}{3}[-\pi^* (y_1|k) +\pi^* (y_2|k) + \\\pi^* (y_3|k)(-\sigma (i|y_3) + \sigma (j|y_3))]\ut(j,k) =\frac{1}{3}.\\
\end{gathered}
\end{equation}
If $\ut(j,k)=1$ then 
\begin{equation}\label{eq:final01}
\frac{1}{3}[-\pi^* (y_1|k) +\pi^* (y_2|k) -\pi^* (y_3|k)] =\frac{1}{3}.
\end{equation}
whereby $\pi^* (y_2|k)=1$ which gives $E_k=\{y_2\}$. On the other hand if $\ut(j,k)=-1$ then 
\begin{equation}\label{eq:final02}
\frac{1}{3}[\pi^* (y_1|k) -\pi^* (y_2|k) - \pi^* (y_3|k)] =\frac{1}{3}.
\end{equation}
whereby $\pi^* (y_1|k)=1$ which gives $E_k=\{y_1\}$. But if  $E_k=\{y_2\}$ or $E_k=\{y_1\}$ then $\pi^* \in $ C.(a)-i). But for such a $ \pi^* $ we have shown \eqref{eq:cai}, which contradicts \eqref{eq:contra2}. Similar arguments can be used to show that supremum is never attained in class C.(b)-ii)-2).

In summary, a Stackelberg equilibrium does not exist for this utility function of the sender. 
\end{examp}
This happens since the structure of the utility function is such that $[\ut_3]$ is a skew-symmetric matrix and $\ut(j,k)=-\ut(i,k),\forall i,j,k \in \Xscr$. Therefore, $\min \limits_{\sigma \in \brr(\pi) } \bar{U}(\pi,\sigma)$ contains $\pi(y|x)$ terms with positive and negative coefficients, where $x \notin \Dscr(\sigma,\pi)$. But making any $\pi(.|x)$ with  positive coefficient exactly equal to $1$, transforms the problem to one where $\min \limits_{\sigma \in \brr(\pi) } \bar{U}(\pi,\sigma)<0$. 
	Therefore, we can conclude that, unlike the deterministic case, the existence of an equilibrium in behavioral strategies is not guaranteed for every  utility function which proves Theorem \ref{theorem:existence_not_guranteed}. }
\section{Conclusion}\label{sec6}
Communication theory has concerned itself with measures of information content in a source or the capacity for transfer of information in a channel. However, these results have validity in the setting of cooperative communication where the sender and receiver have a common goal.
Our paper studied a setting where a strategic sender signals to a receiver to maximize its utility while the receiver seeks to know the true information known to the sender with zero error.  Our main intent was to arrive at and characterize a measure of how informative such an interaction could be for the receiver, when the receiver is the follower.

We introduced a naturally defined quantity called the \textit{informativeness} of  the sender which is the minimum number of source symbols that can be correctly recovered by the receiver in any equilibrium of a Stackelberg game in a deterministic setting. We characterized it and then demonstrated certain properties of the informativeness measure.

{\color{red}Additionally, we also looked at the setting where the players played randomized behavioural strategies. From the study we came to the conclusion that the existence of equilibrium is not guaranteed in this particular setting.}
Translating these results to the large blocklength regime as done in information theory and also in~\cite{vora2022shannon,vora2020zero,vora2020information} is a task for the future. 

\section*{Acknowledgments}
This research was supported by the grant CRG/2019/002975 of the Science and Engineering Research Board, Department of Science and Technology, India. The authors would also like to thank the two anonymous reviewers and the AE for their perceptive comments on an earlier version of this paper.
\bibliographystyle{unsrt}

\bibliography{new_ref.bib}

\end{document}